\newtheorem{thm}{\protect\theoremname}
\theoremstyle{plain}
\newtheorem*{thm*}{\protect\theoremname}
\theoremstyle{plain}
\newtheorem{lem}[thm]{\protect\lemmaname}
\theoremstyle{plain}
\theoremstyle{plain}
\newtheorem*{lem*}{\protect\lemmaname}
\theoremstyle{plain}
\theoremstyle{plain}
\newtheorem{cor}[thm]{\protect\corollaryname}
\newtheorem{assumption}[thm]{Assumption}
\newtheorem{defn}[thm]{Definition}
\crefname{thm}{theorem}{theorems}
\crefname{cor}{corollary}{corollaries}
\crefname{lem}{lemma}{lemmas}
\newcommand\sectionprl[1]{\paragraph{\textit{#1}.---\!\!\!\!}}
  \providecommand{\corollaryname}{Corollary}
  \providecommand{\lemmaname}{Lemma}
  \providecommand{\propositionname}{Proposition}
  \providecommand{\remarkname}{Remark}
\providecommand{\theoremname}{Theorem}
\newcommand{\Or}{\mathcal{O}}
\newcommand{\RR}{\mathbb{R}}
\newcommand{\ZZ}{\mathbb{Z}}
\newcommand\QN{\lambda}
\newcommand\minmax[3]{\inf_{\phi_0,\ldots,\phi_{#2}#1}\sup\Big\{\braket{\psi|#3|\psi}\Big|\ket\psi\in\operatorname{span}\{\phi_0,\ldots,\phi_{#2}\},\|\ket\psi\|=1\Big\}}
\newcommand\essspec{\sigma_{\operatorname{ess}}}
\newcommand{\shrink}{\sigma}
\newcommand\comment[1]{}
\begin{document}
\title{Entanglement area law for 1D gauge theories and bosonic systems}
\author{Nilin Abrahamsen}\thanks{These two authors contributed equally to this work.}
\affiliation{Simons Institute and Department of Mathematics, University of California, Berkeley, CA 94720, USA}
\author{Yu Tong}\thanks{These two authors contributed equally to this work.}\affiliation{Department of Mathematics, University of California, Berkeley, CA 94720, USA}
\author{Ning Bao}\affiliation{Computational Science Initiative, Brookhaven National Laboratory, Upton, NY 11973 USA}
\author{Yuan Su}\affiliation{Google Quantum AI, Venice, CA 90291, USA}
\author{Nathan Wiebe}\affiliation{Department of Computer Science, University of Toronto, Toronto, ON M5S 3E1, Canada\\Pacific Northwest National Laboratory, Richland, WA 99354, USA}

\begin{abstract}
    We prove an entanglement area law for a class of 1D quantum systems involving infinite-dimensional local Hilbert spaces. This class of quantum systems includes bosonic models and lattice gauge theories in one spatial dimension. Our proof relies on new results concerning the robustness of the ground state and spectral gap to the truncation of Hilbert space, applied within the approximate ground state projector (AGSP) framework. Our result provides theoretical justification for using tensor networks to study the ground state properties of quantum systems with infinite local degrees of freedom.
\end{abstract}
\maketitle

\sectionprl{Introduction}
\label{sec:intro}

It has long been conjectured that for a wide range of quantum systems described by gapped local Hamiltonians, the entanglement entropy with respect to any bipartition of the system scales as the boundary area. Such entanglement entropy scaling is known as the \emph{entanglement area law}. An entanglement area law for the ground state of 1D quantum spin systems was first proved in the seminal paper by Hastings \cite{Hastings2007area}, and the scaling with respect to the spectral gap was improved by later work \cite{ALV12,AradKitaevLandauVazirani2013area}. Entanglement area laws have also been proved for degenerate ground states \cite{Chubb_2016,ALVV17} and low-lying eigenstates \cite{ALVV17}. Limited results are also available for higher-dimensional quantum systems, especially for the case where the Hamiltonian is frustration-free \cite{anshu2020entanglement,anshu2021area}. For 1D systems, whether a quantum state satisfies an entanglement area law is an important criterion for determining whether it can be approximated by a matrix-product state \cite{fannes1992finitely,Vidal}, which is a key component in the density-matrix renormalization group (DMRG) algorithm \cite{white1993density,Gharibian2015Hamiltonian}. 
Using theoretical tools constructed for proving the 1D area law, polynomial-time algorithms for computing the ground state of 1D gapped local Hamiltonians were given 
\cite{landau2015polynomial,ALVV17}.

The aforementioned results are all proved in the setting of quantum spin systems, i.e., each lattice site is associated with a finite-dimensional local Hilbert space. 
However, there are many quantum systems of practical interest that involve \emph{infinite-dimensional} local Hilbert spaces. Examples of such include bosonic systems and gauge theories. When a quantum system involves bosons, each bosonic mode corresponds to an infinite-dimensional local Hilbert space representing the occupation number of the mode. A similar situation arises when we consider lattice gauge theories (LGTs), with the Hamiltonians constructed according to \cite{kogut1975hamiltonian}. Given a fixed lattice discretization of a gauge theory, each gauge link (an edge of the lattice) has a local Hilbert space that is spanned by all the elements of the symmetry group, which is infinite-dimensional when the symmetry group contains infinitely many elements. The gauge theories that are of the greatest interests, i.e., the U(1), SU(2), and SU(3) theories, all fall into this category.

Tensor network methods have been extensively applied to studying LGTs to obtain interesting numerical results \cite{PichlerDalmonteEtAl2016real,BanulsBlattEtAl2020simulating,BanulsCichyEtAl2017efficient,banuls2013mass,banuls2013matrix,silvi2017finite,papaefstathiou2021density,tagliacozzo2014tensor}. The entanglement area law is a prerequisite for the ground state to be efficiently approximable by a tensor network state, and hence our result for LGTs provides a theoretical foundation for these numerical results from previous work.

There are two ways in which standard area law techniques are insufficient for our current setting: 
Firstly, the known area laws exhibit a bound which depends on the dimension of the local Hilbert space. The state-of-the-art 1D area law result bounds the entanglement entropy as $\Or(\Delta^{-1}\log^3(d))$ where $d$ is the local Hilbert space dimension and $\Delta$ is the spectral gap  \cite{AradKitaevLandauVazirani2013area,ALVV17}. 
This becomes infinity when the local Hilbert space is infinite-dimensional. 
Secondly, in quantum spin systems all local Hamiltonian terms can be rescaled to have operator norm at most $1$, whereas in the models we consider in this work the local Hamiltonian terms can be unbounded which precludes such a normalization.

For certain non-interacting bosonic systems involving infinite-dimensional local Hilbert spaces the entanglement area law has been proven, for example when the system is exactly solvable \cite{AudenaertEisertPlenioWerner2002entanglement,PlenioEisertDreissigCramer2005entropy,CramerEisertEtAl2006entanglement,CramerEisert2006correlations,CramerEisertPlenio2007statistics,Eisert2010arealaw}.  
However, a general methodology is unavailable for establishing area laws for quantum systems with infinite local degrees of freedom.

This letter gives an entanglement area law for a class of 1D quantum systems that involve bosons or arise from gauge theories, using the approximate ground state projector (AGSP) framework developed in \cite{ALV12}. Our quantum systems of interest involve infinite-dimensional local Hilbert spaces. Examples of such quantum systems include the Hubbard-Holstein model, U(1) and SU(2) LGTs, all of which are defined on a 1D chain. For these models, we can introduce a notion of \textit{local quantum number}, which is the occupation number in the bosonic case, the electric field value in the U(1) LGT, and the total angular momentum in the SU(2) LGT.

\sectionprl{The abstract model and main result}
\label{sec:the_abstract_model}
We consider a system on a line of length $N+1$ with a geometrically local Hamiltonian of the form
$H=H_1+H_2+\cdots+H_N$, where $H_x$ acts on sites $x-1$ and $x$.
At each site $x$ we have local observable $\QN_x$, which we call the local quantum number. The conditions for our results are stated in terms of the following quantities.

\begin{defn}
\label{defn:local}
\leavevmode\begin{enumerate}
\item
Let $\Pi_S^{(x)}=1_{S}(\QN_x)$ be the spectral projector for $\QN_x$ corresponding to eigenvalues in the set $S$. 
    \item For cutoff $\Lambda\ge0$ define the \emph{truncated local Hilbert space dimension} of a site $x$ to be $d(\Lambda)=\mathrm{rank}(\Pi_{[-\Lambda,\Lambda]}^{(x)})$. 
    \item For cutoff $\Lambda\ge0$ define the truncated norm of the local Hamiltonian $H_x$ constrained to $[-\Lambda,\Lambda]$ to be $\mathcal{N}(\Lambda)=\max_x\|H_x \Pi^{(x)}_{[-\Lambda,\Lambda]}\| $ where $\|\cdot\|$ is the spectral norm and the maximum is over sites $x$. 
\end{enumerate}
\end{defn}
We require that  the truncated local norm and Hilbert space dimension of the local Hamiltonian grow at most polynomially with with the cutoff $\Lambda$.  
\begin{assumption}
\label{assumption:local_quantum_numbers}  The  
maximum truncated local dimension $d(\Lambda)$ and norm $\mathcal{N}(\Lambda)$ 
satisfy
$
d(\Lambda),\ \mathcal{N}(\Lambda)=\Or(\poly(\Lambda)).
$
Also $\braket{|\lambda_x|}=\Or(1)$ where $\braket{\cdot}$ denotes the ground state expectation value.
\end{assumption}
Following \cite{TongAlbertMcCleanPreskillSu2021provably} we also assume a site-dependent decomposition of the global Hamiltonian into a quantum number-modifying part $H_W$ and a quantum number-preserving part $H_R$. As will be explained later, these assumptions are satisfied by a variety of quantum systems, to which our result applies.
\begin{assumption}[Growth of local quantum numbers]
\label{assumption:local_quantum_number_growth}
There exist non-negative real-valued constants $\chi$ and $r$ such that, for any $x$, the Hamiltonian can be decomposed as
\begin{equation}
    \label{eq:ham_walk}
    H = H^{(x)}_W + H^{(x)}_R,
\end{equation}
where $H^{(x)}_W$ and $H^{(x)}_R$ satisfy (for $\Pi_{\lambda}^{(x)}\equiv \Pi_{\{\lambda\}}^{(x)}$)
\begin{subequations}
\label{eq:m_conditions_ham_general}
\begin{align}
    &\Pi^{(x)}_{\lambda}H_W^{(x)} \Pi^{(x)}_{\lambda'}=0,\ \text{if }|\lambda-\lambda'|>1, \label{eq:m_conditions_ham_general_1}\\
    &\|H_W^{(x)}\Pi^{(x)}_{[-\Lambda,\Lambda]}\|\leq \chi(\Lambda+1)^{r}, \label{eq:m_conditions_ham_general_2}\\
    &[H_R^{(x)},\Pi^{(x)}_{\lambda}]=0~\text{for all~}\lambda, \label{eq:m_conditions_ham_general_3}
\end{align}
for all $x=1,2,\ldots,N$.
\end{subequations}
\end{assumption}
\begin{thm*}[Main result]
For the gapped ground state of any Hamiltonian $H$ satisfying Assumptions \ref{assumption:local_quantum_numbers} and \ref{assumption:local_quantum_number_growth}, in particular the 1D Hubbard-Holstein model, or the 1D U(1) or SU(2) LGTs, the entanglement entropy across a cut scales as $\Or(\poly(\Delta^{-1}))$, where $\Delta$ is the spectral gap, assuming that all coefficients in the Hamiltonian remain constant. In particular, this entanglement entropy scaling is independent of the system size.
\end{thm*}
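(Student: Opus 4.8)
The plan is to convert the infinite-dimensional problem into a finite-dimensional one by truncating each local quantum number to the window $[-\Lambda,\Lambda]$, to show that neither the ground state nor the spectral gap is appreciably disturbed by this truncation, and then to run the finite-dimensional AGSP area law of \cite{ALV12,AradKitaevLandauVazirani2013area,ALVV17} on the truncated model, finally transporting the resulting bound back to the true ground state. Throughout, $P_\Lambda=\prod_x\Pi^{(x)}_{[-\Lambda,\Lambda]}$ denotes the truncation projector, $Q_\Lambda=1-P_\Lambda$, and $H_\Lambda=P_\Lambda HP_\Lambda$ the truncated Hamiltonian, which acts on a finite-dimensional space with local dimension $d(\Lambda)$ and local-term norm bounded by $\mathcal N(\Lambda)$.

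The crux is a concentration estimate showing that the ground state $\ket{\Omega}$ places super-polynomially small weight on large quantum numbers, i.e.\ $\|Q_\Lambda\ket{\Omega}\|$ decays faster than any power of $1/\Lambda$. This is the only genuinely non-perturbative input, since $H$ is unbounded and the ``perturbation'' $H-H_\Lambda$ has no finite norm. The mechanism is Assumption~\ref{assumption:local_quantum_number_growth}: because $H_W^{(x)}$ shifts $\QN_x$ by at most one unit per application (Eq.~\eqref{eq:m_conditions_ham_general_1}) with norm growing only as $\chi(\Lambda+1)^r$ (Eq.~\eqref{eq:m_conditions_ham_general_2}), while $H_R^{(x)}$ leaves every $\QN_x$ invariant (Eq.~\eqref{eq:m_conditions_ham_general_3}), reaching a quantum number of size $\Lambda$ requires order $\Lambda$ applications of $H_W^{(x)}$, which the gapped ground state suppresses. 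Following the strategy of \cite{TongAlbertMcCleanPreskillSu2021provably}, I would make this quantitative either through a recursion on $\braket{\Pi^{(x)}_{|\lambda|>\Lambda}}$ or through a Combes--Thomas-type tilting estimate on $\braket{e^{s|\lambda_x|}}$, whose finiteness is anchored by the first-moment bound $\braket{|\lambda_x|}=\Or(1)$ of Assumption~\ref{assumption:local_quantum_numbers}.

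Given this tail bound, robustness of the gap follows cleanly. Since $H_\Lambda$ agrees with $H$ as a quadratic form on the range of $P_\Lambda$, restricting the min--max characterization to that subspace can only raise the variational values, so the second one satisfies $\mu_2(H_\Lambda)\ge\mu_2(H)=E_0+\Delta$ with $E_0$ the ground-state energy; no spurious low-lying level can appear. Using the normalized trial vector $P_\Lambda\ket{\Omega}$ and bounding $\braket{\Omega|Q_\Lambda HQ_\Lambda|\Omega}$ via $HQ_\Lambda\ket{\Omega}=[H,Q_\Lambda]\ket{\Omega}+E_0Q_\Lambda\ket{\Omega}$---where, site by site, the quantum-number-preserving part drops out of the commutator by Eq.~\eqref{eq:m_conditions_ham_general_3} and only $H_W^{(x)}$ survives with truncated norm $\poly(\Lambda)$---gives $\mu_1(H_\Lambda)\le E_0+\delta(\Lambda)$ with $\delta(\Lambda)\to0$. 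Hence $\Delta_\Lambda=\mu_2(H_\Lambda)-\mu_1(H_\Lambda)\ge\Delta-\delta(\Lambda)$, and choosing $\Lambda=\poly(1/\Delta)$ large enough yields $\Delta_\Lambda\ge\Delta/2$ together with a truncated ground state $\ket{\Omega_\Lambda}$ of overlap $1-\Or(\delta(\Lambda)/\Delta)$ with $\ket{\Omega}$. Applying the finite-dimensional AGSP area law to $H_\Lambda$, and rescaling $H_\Lambda\mapsto H_\Lambda/\mathcal N(\Lambda)$ to account for the non-unit-norm local terms, produces a bond-dimension-$D$ approximation of $\ket{\Omega_\Lambda}$ across the cut with $\log D=\Or\!\big(\tfrac{\mathcal N(\Lambda)}{\Delta_\Lambda}\,\mathrm{polylog}\,d(\Lambda)\big)$.

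It remains to optimize and transfer. By Assumption~\ref{assumption:local_quantum_numbers} both $d(\Lambda)$ and $\mathcal N(\Lambda)$ are $\Or(\poly(\Lambda))$, so with $\Lambda=\poly(1/\Delta)$ and $\Delta_\Lambda\ge\Delta/2$ the entanglement entropy of $\ket{\Omega_\Lambda}$ is $\Or(\poly(1/\Delta))$, manifestly independent of the system size $N$. To pass to $\ket{\Omega}$ I would note that $\ket{\Omega}$ inherits a rank-$D$ approximation from its closeness to $\ket{\Omega_\Lambda}$; because von Neumann entropy is not trace-norm continuous in infinite dimensions, I would invoke the same quantum-number tail bound, now applied to the reduced state $\rho_L$, to show that its Schmidt spectrum has a rapidly decaying tail, so that a Winter-type continuity estimate bounds $S(\rho_L)$ by the finite-dimensional entropy of $\rho_L^\Lambda$ plus a vanishing correction. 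The main obstacle is the concentration estimate of the second paragraph: every subsequent step, including gap robustness and the control of the entropy tail, reduces to it, and it is the single place where the unbounded, infinite-dimensional nature of the Hamiltonian must be met directly.
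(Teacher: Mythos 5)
Your first half runs genuinely parallel to the paper's Sections~\ref{sec:effective_dims} and \ref{sec:ham_truncation}: the quantum-number tail bound of \cite{TongAlbertMcCleanPreskillSu2021provably} as the non-perturbative input, a min--max argument showing the truncation cannot lower $\epsilon_1$, and a trial-state bound on the truncated ground energy in which the commutator reduces to $H_W^{(x)}$ terms of truncated norm $\poly(\Lambda)$ (cf.\ Lemma~\ref{lem:robustness_first_step} and Corollary~\ref{cor:truncation_and_ham_error}). But truncating \emph{every} site is a real gap. The tail bound is per-site, so $\|Q_\Lambda\ket{\Omega}\|^2\le\sum_{x=1}^{N}\|(1-\Pi^{(x)}_{[-\Lambda,\Lambda]})\ket{\Omega}\|^2$, giving $\delta_1\sim\sqrt{N}\,e^{-\Omega(\sqrt{\Delta\Lambda^{1-r}})}$. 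With $\Lambda=\poly(1/\Delta)$ fixed, this is not small for large $N$, and your estimate $\mu_1(H_\Lambda)\le E_0+\delta(\Lambda)$ (which needs $\delta_2/(1-\delta_1^2)$ small relative to $\Delta$) fails; repairing it forces $\Lambda^{1-r}\gtrsim\Delta^{-1}\log^2 N$, which injects $\polylog(N)$ into $d(\Lambda)$ and $\mathcal{N}(\Lambda)$ and hence into the final entropy bound, contradicting the claimed system-size independence. The paper avoids this by truncating only $s$ sites around the cut, with $s=\poly(\Delta^{-1})$ up to logs, so that the errors scale as $\sqrt{s}$ (Corollaries~\ref{cor:multiple_truncation} and \ref{cor:truncation_and_ham_error}), and by taming the untouched exterior blocks $H_L,H_R$ with the spectral truncation $(\cdot)^{\leq t}$ of \eqref{eq:Hpp}, which also keeps $\|H''\|=\Or(s\mathcal{N}(\Lambda)+T)$ independent of $N$ for the Chebyshev AGSP.

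The second gap is the transfer step, and it is exactly the failure mode the paper flags in the introduction. A single truncation gives only a fixed trace distance $\epsilon$ between $\ket{\Omega}$ and $\ket{\Omega_\Lambda}$; since entropy is not trace-norm continuous in infinite dimensions, you propose a Winter-type energy-constrained continuity estimate, but the only available constraint on the half-chain reduced state is $\braket{\sum_{x\in L}|\lambda_x|}\le|L|\bar\lambda$, which grows with the system size, so the correction term is neither vanishing nor size-independent. Likewise, closeness to $\ket{\Omega_\Lambda}$ controls the Schmidt tail of $\ket{\Omega}$ only down to a floor of order $\epsilon^2$, and infinitely many Schmidt coefficients of total weight $\epsilon^2$ can carry arbitrarily large entropy, so no single truncation suffices. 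The paper never transfers an entropy bound between states: it constructs a \emph{sequence} of AGSPs $K_n$ whose target states are truncated ground states at levels $(\Lambda_n,t_n)$ chosen so that $\delta_n=D(\ket{\psi_0''},\ket{\psi_0})^2$ satisfies $\sum_n n\delta_n=\Or(1)$, and the improved off-the-shelf lemma (Lemma~\ref{OTRlemma}, via the viability/Poincar\'e Schmidt-tail lemma and Corollary~\ref{cor:OTR_simple}) then bounds the entropy of the \emph{true} ground state directly by $\Or(\log R)$. Your outline is missing this sequence-of-truncations idea, and without it the last step cannot be closed. (A minor omission: for the ``in particular'' models, $\braket{|\lambda_x|}=\Or(1)$ is not free but is proved by the variational argument of Lemmas~\ref{lem:local_quantum_number_bound_first_step} and \ref{lem:local_quantum_number_bound_general}.)
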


\sectionprl{The proof strategy}
Our result utilizes a local quantum number tail bound recently obtained in \cite{TongAlbertMcCleanPreskillSu2021provably}. 
This tail bound tells us that for the relevant class of quantum systems, a spectrally isolated energy eigenstate can be well-approximated by a truncated state with low local quantum numbers.
This suggests that we may choose an effective local Hilbert space with finite dimensions and thereby apply the spin chain framework developed in \cite{ALV12,AradKitaevLandauVazirani2013area}.
To implement this proof strategy, we need to apply truncation to the local Hilbert space and by extension to the Hamiltonian terms, on each lattice site close to the cut. We show in Theorem~\ref{thm:robustness} that the truncation changes the ground state only by an exponentially small amount. This is similar to what is known as the robustness theorem \cite[Theorem 6.1]{AradKitaevLandauVazirani2013area}. Because of this we can show that the AGSP framework developed in \cite{AradKitaevLandauVazirani2013area} is robust to this truncation. In the case of frustration-free spin chains Ref.\ \cite{ALV12} showed that the existence of an AGSP with exact target space and suitable parameters directly implies an area law. Analogously, the ``off-the-shelf'' lemma with approximate target space in \cite{OTR} incorporates techniques from \cite{ALVV17} to directly conclude an area law from the existence of a sequence of AGSP with increasingly accurate target spaces. Using this off-the-shelf lemma the truncation-based AGSP implies the entanglement area law for our class of unbounded quantum systems. 

In the course of proving the area law for unbounded systems we show that the mean absolute value of the local quantum number can be bounded independently of the system size for many quantum systems without translation symmetry (Lemmas \ref{lem:local_quantum_number_bound_first_step} and \ref{lem:local_quantum_number_bound_general}), a bound that we consider to be of independent interest. These systems include the Hubbard-Holstein model (Corollary \ref{cor:mean_abs_bound_Hubbard_Holstein}) and U(1) and SU(2) LGTs (Corollary \ref{cor:mean_abs_bound_LGT}) with open boundary conditions.

In proving area laws for unbounded quantum systems, it is tempting to plug the eigenstate tail bound from \cite{TongAlbertMcCleanPreskillSu2021provably} directly into the area law result for spin systems \cite{AradKitaevLandauVazirani2013area}. However, such a naive strategy does \emph{not} seem to work here.
The main reason is that the tail bound from \cite{TongAlbertMcCleanPreskillSu2021provably} only guarantees the proximity of the quantum states before and after truncation, but not the proximity of the corresponding entanglement entropies. In the finite-dimensional setting, one could invoke the Fannes' inequality to estimate the entanglement entropy difference \cite{watrous2018theory,wilde2013quantum}, but its explicit dependence on the dimension of the Hilbert space will ruin the area law scaling for unbounded quantum systems. In fact, the main contribution of our paper is to give a careful truncation of the local Hilbert space that does not blow up entanglement entropy of the ground state, overcoming the technical difficulties mentioned above.

\sectionprl{Application to 1D U(1) LGT}
\label{sec:U(1)_LGT}

Our result applies to a wide range of physically relevant quantum systems, including the Hubbard-Holstein model and the U(1) and SU(2) LGTs. In \cite[Section~I]{TongAlbertMcCleanPreskillSu2021provably} it is discussed how they all satisfy the Assumption \ref{assumption:local_quantum_number_growth}. For concreteness, we will discuss in detail how the U(1) LGT fits into the framework of our current work.

In the 1D U(1) LGT, the system consists of a chain of $N$ nodes with $N-1$ links between adjacent nodes. We denote each node by $x$, and the links by the node on its left end. The links are sometimes called \textit{gauge links}.

On each node $x$ we have a fermionic mode whose annihilation operator is denoted by $\phi_x$. Each link consists of a planar rotor, whose configuration can be described by an angle $\theta\in[0,2\pi]$. 
The local Hilbert space is the vector space of square-integrable functions on $U(1)$.
An orthonormal basis of the local Hilbert space
can be chosen to be the Fourier basis (the electric basis).
More specifically we denote by $\ket{k}$ the Fourier mode $(2\pi)^{-1/2}e^{ik\theta}$, and $\{\ket{k}:k\in\ZZ\}$ form the basis we need.

We further define the operators $E_{x}$ and $U_{x}$, which act on the vector space of the links, through
$
E_{x}\ket{k} = k\ket{k}, U_{x}\ket{k} = \ket{k-1}.
$
The Hamiltonian for $U(1)$ LGT can then be described in terms of these operators via
\begin{equation}
    H = H_M +H_{GM} + H_E,
    \end{equation}
    where the three terms $H_M, H_{GM}, H_E$ describe the fermionic mass (using staggered fermions \cite{kogut1975hamiltonian}), the gauge-matter interaction, and the electric energy respectively,
\begin{equation}
\begin{aligned}
    H_M &= g_M\sum_{x}(-1)^x \phi_x^{\dagger}\phi_x, \\
    H_{GM} &= g_{GM}\sum_{x} (\phi_x^{\dagger}U_{x}\phi_{x+1}+\phi_{x+1}^{\dagger}U_{x}^{\dagger}\phi_{x}), \\
    H_E &= g_E\sum_{x} E_{x}^2.
\end{aligned}
\end{equation}

For LGTs, physical states need to satisfy Gauss's law: $G_x\ket{\Phi}=0$ for all physical states $\ket{\phi}$ where $G_x=E_x-E_{x-1}-\rho_x$, $\rho_x=\phi_x^{\dagger}\phi_x+((-1)^x-1)/2$. To ensure that the ground state of the Hamiltonian satisfy Gauss's law, we add a penalty term to the original Hamiltonian so that it becomes $H=H_M +H_{GM} + H_E+\lambda_G \sum_x G_x^2$, where we assume that $\lambda_G=\Or(1)$.

We can first write the Hamiltonian as a sum of local terms $H=H_1+H_2+\cdots+H_N$, where $H_x=g_M(-1)^x \phi_x^{\dagger}\phi_x+g_{GM} (\phi_x^{\dagger}U_{x}\phi_{x+1}+\phi_{x+1}^{\dagger}U_{x}^{\dagger}\phi_{x})+g_E E_{x}^2+\lambda_G G_{x+1}^2$. We consider the site $x$ to consist of both the node $x$ and the link $x$. The truncation is done through projecting in the electric basis: we define $\Pi^{(x)}_{[-\Lambda,\Lambda]}=\sum_{|k|\leq\Lambda}\ket{k}\bra{k}$. The truncated local Hilbert space is therefore spanned by $\{\ket{k}:|k|\leq\Lambda\}\otimes\{\ket{0},\ket{1}\}$, where $\ket{0},\ket{1}$ are the states of the fermionic mode on node $x$. We define the local quantum number $\lambda_x=E_x$, which is consistent with this truncation.

Now let us first check Assumption \ref{assumption:local_quantum_numbers}. The dimension of the truncated local Hilbert space is clearly $d(\Lambda)=2(2\Lambda+1)$. For $\mathcal{N}(\Lambda)=\|H_x\Pi^{(x)}_{[-\Lambda,\Lambda]}\|$, direct calculation shows $\mathcal{N}(\Lambda)=\Or(\Lambda^2)$. Verifying that $\braket{|\lambda_x|}=\Or(1)$ is a non-trivial task, which we perform in Section \ref{sec:mean_abs_local_quantum_number_bound} (for the SU(2) LGT and the Hubbard-Holstein model as well).

Next we check Assumption \ref{assumption:local_quantum_number_growth}. To decompose the Hamiltonian into $H^{(x)}_W$ and $H^{(x)}_R$, we observe that the only term that changes the local quantum number on site $x$ is $g_{GM}(\phi_x^{\dagger}U_{x}\phi_{x+1}+\phi_{x+1}^{\dagger}U_{x}^{\dagger}\phi_{x})$, which we define to be $H^{(x)}_W$. Because $\braket{k|U_x|k'}=0$ if $|k-k'|> 1$, \eqref{eq:m_conditions_ham_general_1} is satisfied. All the other terms are collected into $H^{(x)}_R$. Note that $H^{(x)}_R$ acts non-trivially on site $x$ (through the term $g_E E_x^2$), but it does not change the local quantum number. Therefore by this definition \eqref{eq:m_conditions_ham_general_3} is satisfied. It only remains to check \eqref{eq:m_conditions_ham_general_2}, which, because in this example $H^{(x)}_W$ is bounded even without truncation, is true by choosing $\chi=2|g_{GM}|$ and $r=0$. 

By checking the Assumptions \ref{assumption:local_quantum_numbers} and \ref{assumption:local_quantum_number_growth}, we can see that our main result (Theorem \ref{thm:area_law} from the supplemental material) applies to the 1D U(1) LGT, and therefore the area law is established provided that the spectral gap remains bounded away from $0$ as the system size increases. A similar procedure can be applied to the Hubbard-Holstein model and SU(2) LGT in 1D as well.

\sectionprl{Conclusion}
\label{sec:conclusion}

We have rigorously established an entanglement area law for the gapped ground state of 1D quantum systems with infinite-dimensional local Hilbert spaces under natural assumptions 
on the local quantum numbers. In particular, our entanglement area law applies to U(1) and SU(2) LGTs and the Hubbard-Holstein model in 1D, and the result may be adapted to handle other bosonic and gauge theory models of interest. 
Proving the entanglement area law is an important step toward justifying the use of tensor network methods in classical simulation of quantum systems. The proof techniques also provide many useful tools for designing a rigorous RG algorithm for these quantum systems, which should be investigated by future work. It is also worth considering how the result in this work can be generalized to degenerate ground state or low-energy states as in Ref.\ \cite{ALVV17}.

It is interesting that our proof of the area law uses seemingly different techniques from those of \cite{Hastings2007area}. This hints at a deeper underlying structure that empowers both proof techniques. We will leave exploration for what this could be for future work.


Our result relies on the local quantum number tail bound proved in Ref.\ \cite{TongAlbertMcCleanPreskillSu2021provably}, which in turn follows from technical tools for analyzing the dynamical simulation of unbounded Hamiltonians on digital quantum computers. Along this line, previous work such as\ \cite{anshu2020communication,kuwahara2021improved,nachtergaele2006lieb,Tran2019powerlaw} have found other applications of quantum simulation techniques, to solving problems in quantum many-body physics beyond quantum computing. We consider applications of this kind to be very interesting, 
 as they demonstrate important byproducts of the study of quantum algorithms which are of interest independently of the hope of building scalable quantum computers.
\sectionprl{Acknowledgments}

This work is partially supported by the Department of Energy under the Quantum System Accelerator (QSA) program (Y.T.) and the Spin Chain Bootstrap under BES (N.B.), by the National Science Foundation under the Quantum Leap Challenge Institute (QLCI) program through grant number OMA-2016245, NSERC Discovery Program (N.W.), Google Inc (N.W.) and by the Simons Foundation (N.A.). Y.S.\ thanks Andrew M.\ Childs, Abhinav Deshpande, Adam Ehrenberg, Alexey V.\ Gorshkov, Andrew Guo, and Minh Cong Tran for helpful discussions about the entanglement area law.  N.W.\ thanks Yuta Kikuchi for useful preliminary discussions about area laws for unbounded operators.

\bibliographystyle{abbrvnat}
\bibliography{combinedrefs}

\phantom{.}
\newpage
\phantom{.}
\newpage

\setcounter{page}{0}
\setcounter{secnumdepth}{2}
\onecolumngrid
\begin{center}
{\textsc{\Large Supplement to\\\large Entanglement area law for 1D gauge theories and bosonic systems}}
\end{center}

\medskip
\noindent \textit{Organization of the supplemental materials.} 
In Section \ref{sec:gauge_theories_and_bosonic_systems} we will discuss examples of quantum systems that are covered in the framework of this work. 
In the remaining parts of the supplemental materials we will provide the detailed proof of our results.
In Section \ref{sec:effective_dims} we will discuss properties of the ground state that can be used to truncate the infinite-dimensional local Hilbert space. In Section \ref{sec:mean_abs_local_quantum_number_bound} we prove that the local quantum numbers have a mean absolute value that can be bounded independently of the system size. 
This fact removes an otherwise extra assumption in the proof of the area law.
In Section \ref{sec:ham_truncation} we will demonstrate that the ground state is robust to such truncations using properties derived in the previous section. In Section \ref{sec:area_law} we prove our main result of the entanglement area law scaling, using an ``off-the-shelf'' lemma slightly tighter than the one from \cite{OTR}. 

\section{1D gauge theories and bosonic systems}
\label{sec:gauge_theories_and_bosonic_systems}

We now introduce some specific quantum systems that we will study and extract the common structure of these systems. These quantum systems are all discussed in \cite[Section~I]{TongAlbertMcCleanPreskillSu2021provably}.
We first consider the 1D Hubbard-Holstein model \cite{holstein1959studies}, 
a model describing electron-phonon interactions. 

\vspace{1em}
\textit{The Hubbard-Holstein model.} This model is defined on a 1D chain of $N$ nodes.
Each node in the lattice, indexed by $x$, contains two fermionic modes (spin up and down) and a bosonic mode. 
The Hamiltonian is
\begin{equation}
    \label{eq:ham_hubbard_holstein}
    H = H_f + H_{fb} + H_b,
\end{equation}
where $H_{f}$ is the Hamiltonian of the Fermi-Hubbard model \cite{Hubbard1963electron} acting on only the fermionic modes,
\begin{equation}
    H_{fb} =  g\sum_{x=1}^N (b_x^{\dagger}+b_x)(n_{x,\uparrow}+n_{x,\downarrow}-1), 
\end{equation}
is the boson-fermion coupling, and
\[
    H_b = \omega_0 \sum_{x=1}^N b_{x}^{\dagger}b_x,
\]
the purely bosonic parts of the Hamiltonian. Here, $b_x$ is the bosonic annihilation operator on node $x$, and $n_{x,\sigma}$ is the fermionic number operator for node $x$ and spin $\sigma$. 
\vspace{1em}

For gauge theories, we consider the Hamiltonian formulation of the U(1) and SU(2) LGTs \cite{kogut1975hamiltonian} in one dimension. The U(1) LGT is also known as the Schwinger model.

\vspace{1em}
\textit{The U(1) lattice gauge theory.}
The system consists of a chain of $N$ nodes with $N-1$ links between adjacent nodes. We denote each node by $x$, and the links by the node on its left end. The links are sometimes called \textit{gauge links}.

On each node $x$ we have a fermionic mode whose annihilation operator is denoted by $\phi_x$. Each link consists of a planar rotor, whose configuration can be described by an angle $\theta\in[0,2\pi]$. 
The local Hilbert space is the vector space of square-integrable functions on $U(1)$.
An orthonormal basis of the local Hilbert space
can be chosen to be the Fourier basis.
More specifically we denote by $\ket{k}$ the Fourier mode $(2\pi)^{-1/2}e^{ik\theta}$, and $\{\ket{k}:k\in\ZZ\}$ form the basis we need.

We further define the operators $E_{x}$ and $U_{x}$, which act on the vector space of the links, through
\begin{equation}
\label{eq:operators_U(1)}
    E_{x}\ket{k} = k\ket{k},\qquad U_{x}\ket{k} = \ket{k-1}.
\end{equation}
The Hamiltonian for $U(1)$ LGT can then be described in terms of these operators via
\begin{equation}
\label{eq:LGT_ham}
    H = H_M +H_{GM} + H_E,
    \end{equation}
    where the three terms $H_M, H_{GM}, H_E$ describe the fermionic mass (using staggered fermions \cite{kogut1975hamiltonian}), the gauge-matter interaction, and the electric energy respectively,
\begin{equation}
\begin{aligned}
    H_M &= g_M\sum_{x}(-1)^x \phi_x^{\dagger}\phi_x, \\
    H_{GM} &= g_{GM}\sum_{x} (\phi_x^{\dagger}U_{x}\phi_{x+1}+\phi_{x+1}^{\dagger}U_{x}^{\dagger}\phi_{x}), \\
    H_E &= g_E\sum_{x} E_{x}^2.
\end{aligned}
\end{equation}

In the context of the U(1) LGT, we need to ensure that the quantum state satisfies Gauss's law: 
\begin{equation}
    G_x\ket{\Phi}=0
\end{equation}
for all physical states $\ket{\phi}$ where 
\begin{equation}
    G_x=E_x-E_{x-1}-\rho_x,\quad \rho_x=\phi_x^{\dagger}\phi_x+((-1)^x-1)/2.
\end{equation} 
To ensure that the ground state of the Hamiltonian satisfy Gauss's law, we add a penalty term to the original Hamiltonian so that it becomes $H=H_M +H_{GM} + H_E+\lambda_G \sum_x G_x^2$, where we assume that $\lambda_G=\Or(1)$.

\vspace{1em}
\textit{The SU(2) lattice gauge theory.} 
We consider the theory using the fundamental representation of SU(2), as done in \cite{BanulsCichyEtAl2017efficient}. 
Each node $x$ now contains two fermionic modes, whose annihilation operators are denoted by $\phi_{x}^l$, $l=1,2$. 
Each gauge link consists of a rigid rotator whose configuration is described by an element of the group SU(2) \cite{kogut1975hamiltonian}. 

The Hamiltonian takes the form \eqref{eq:LGT_ham}, and is invariant under $\mathrm{SU}(2)$ transformations acting either from the left or from the right, which correspond to rotations of the rigid rotator with respect to space-fixed or body-fixed axes respectively. 

Physical states in SU(2) LGT also needs to satisfy Gauss's law, which takes a form that is similar to the U(1) case. We also ensure that it is satisfied in the ground state we study by adding a penalty term to the Hamiltonian.

\section{Truncating the local Hilbert space}
\label{sec:effective_dims}

Although we consider the setting where the local Hilbert spaces are infinite dimensional, we can approximate spectrally isolated eigenstates with states containing low local quantum numbers.
This fact is made rigorous in \cite[Theorem 12]{TongAlbertMcCleanPreskillSu2021provably} which we restate here in a slightly modified way:
\begin{thm}[Quantum number distribution tail bound]
\label{thm:tail_bound}
    Let $H$ be a Hamiltonian satisfying~Assumption \ref{assumption:local_quantum_number_growth}.
    Let $\ket{\Psi}$ be an eigenstate of $H$ corresponding to an eigenvalue $\varepsilon$ with multiplicity $1$, and $\varepsilon$ be separated from the rest of the spectrum of $H$ by a spectral gap $\Delta$. Moreover, for a fixed site $x$ we assume local quantum number $|\lambda_x|$ has a finite expectation value:
    \[
    \sum_{\lambda}|\lambda|\braket{\Psi|\Pi^{(x)}_{\lambda}|\Psi}=\bar{\lambda}_x<\infty,
    \]
    we then have that
    \[
    \|(I-{\Pi}^{(x)}_{[-\Lambda,\Lambda]})\ket{\Psi}\| \leq e^{-\Omega\left(\sqrt{\chi^{-1}\Delta(\Lambda^{1-r}-(2\bar{\lambda}_x)^{1-r})}\right)}.
    \]
\end{thm}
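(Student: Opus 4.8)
The plan is to reduce the statement to a tail bound on the probability distribution $p_\lambda=\braket{\Psi|\Pi^{(x)}_{\lambda}|\Psi}$ of the local quantum number in the eigenstate, since $\|(I-\Pi^{(x)}_{[-\Lambda,\Lambda]})\ket{\Psi}\|^2=\sum_{|\lambda|>\Lambda}p_\lambda$. The only structural inputs are the decomposition $H=H_W^{(x)}+H_R^{(x)}$ in which $H_R^{(x)}$ is diagonal in $\lambda_x$ while $H_W^{(x)}$ is \emph{tridiagonal} in $\lambda_x$ (it shifts the quantum number by at most one) with truncated norm bounded by $\chi(\Lambda+1)^r$, together with the spectral gap $\Delta$ and the finite mean $\bar{\lambda}_x$. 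I would combine these through a tilted second-moment estimate.

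The engine is the following gap inequality. For any self-adjoint $W$ that is a function of $\lambda_x$ (hence commutes with $H_R^{(x)}$), the eigenvalue equation $H\ket{\Psi}=\varepsilon\ket{\Psi}$ yields the identity $\braket{\Psi|W(H-\varepsilon)W|\Psi}=-\tfrac12\braket{\Psi|[W,[W,H_W^{(x)}]]|\Psi}$, where only $H_W^{(x)}$ survives because $[H_R^{(x)},W]=0$. Setting $\ket{\Psi^\perp}=(I-\ket{\Psi}\bra{\Psi})W\ket{\Psi}$ and using that $\varepsilon$ is isolated by $\Delta$, the left-hand side is at least $\Delta\,\mathrm{Var}_\Psi(W)$. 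On the right-hand side the band structure of $H_W^{(x)}$ forces the double commutator to have matrix elements proportional to $(W(\lambda)-W(\lambda'))^2$ with $|\lambda-\lambda'|\le 1$, so it is controlled by the \emph{local increments} of $W$ weighted by the truncated norm $\chi(|\lambda|+1)^r$. The outcome is a self-improving inequality whose right-hand side is a sum over $\lambda$ of (local increment of $W$)$^2$ times $\chi(|\lambda|+1)^r$ times $(p_\lambda+p_{\lambda+1})$.

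I would then take $W=e^{s\phi(|\lambda_x|)}$ with $\phi$ an increasing concave profile chosen so that the squared increment of $\phi$ times $\chi(|\lambda|+1)^r$ stays bounded; balancing against the target forces $\phi$ to grow like a power of $\lambda$ whose exponent is adapted to $r$. This converts the variance inequality into a bound of the form $\braket{\Psi|e^{2s\phi}|\Psi}\le C\braket{\Psi|e^{s\phi}|\Psi}^2$ valid for all tilts $s\lesssim\sqrt{\Delta}$. Applying Markov's inequality to $e^{2s\phi(|\lambda_x|)}$ and optimizing $s$ up to the threshold $s\sim\sqrt{\Delta}$ produces a stretched-exponential tail, and the $\chi^{-1}$, $\Delta$, and $\Lambda^{1-r}$ dependences emerge from this optimization. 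The mean $\bar{\lambda}_x$ enters via Jensen's inequality: by concavity, $\braket{\phi(|\lambda_x|)}\le\phi(\bar{\lambda}_x)$, so the baseline moment $\braket{e^{s\phi}}$ is pinned at the scale of the mean and no decay is claimed until $\Lambda$ exceeds the typical scale $2\bar{\lambda}_x$, which is the origin of the subtracted $(2\bar{\lambda}_x)^{1-r}$.

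The main obstacle is that the tilting operator $e^{s\phi(|\lambda_x|)}$ is \emph{unbounded}, so applying it to $\ket{\Psi}$ and invoking the gap is a priori circular: the finiteness of $\braket{e^{2s\phi}}$ that I want to conclude is what justifies the manipulations in the first place. I would resolve this by running the argument with a capped weight $W_M=e^{s\min(\phi(|\lambda_x|),M)}$, which is bounded for each $M$, deriving an $M$-uniform estimate, and then letting $M\to\infty$ by monotone convergence; the finite first moment $\bar{\lambda}_x<\infty$ is precisely what seeds this bootstrap. A secondary difficulty is recovering the exact power $\Lambda^{1-r}$ and the precise placement of $(2\bar{\lambda}_x)^{1-r}$ inside the square root, since a one-shot Markov bound gives the right shape only after the tilt is optimized scale by scale; the cleanest route is therefore to recast the moment inequality as a discrete recursion in $\lambda$ and accumulate the local decay rate from $2\bar{\lambda}_x$ up to $\Lambda$, which integrates to exactly $\sqrt{\chi^{-1}\Delta(\Lambda^{1-r}-(2\bar{\lambda}_x)^{1-r})}$.
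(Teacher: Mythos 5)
Two concrete gaps, both fixable but both genuine as written. First, your engine inequality is false in the generality the theorem requires. Writing $W\ket\Psi=\braket{\Psi|W|\Psi}\ket\Psi+\ket{\Psi^\perp}$, your left-hand side equals $\braket{\Psi^\perp|(H-\varepsilon)|\Psi^\perp}$, and the hypothesis only says $\varepsilon$ is \emph{isolated}: on $\{\ket\Psi\}^\perp$ the spectrum of $H-\varepsilon$ has absolute value at least $\Delta$, not $H-\varepsilon\succeq\Delta$. The theorem covers any nondegenerate isolated eigenstate, and for an interior eigenvalue the operator $H-\varepsilon$ restricted to $\{\ket\Psi\}^\perp$ has both signs, so $\braket{\Psi|W(H-\varepsilon)W|\Psi}$ can be negative and the claimed bound $\geq\Delta\operatorname{Var}_\Psi(W)$ fails; it is valid only when $\varepsilon$ is the ground energy. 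The repair is cheap and even simplifies your algebra: use $\|(H-\varepsilon)W\ket\Psi\|^2=\|[H_W^{(x)},W]\ket\Psi\|^2\geq\Delta^2\operatorname{Var}_\Psi(W)$, which holds because $H-\varepsilon$ preserves $\{\ket\Psi\}^\perp$ and is invertible there with inverse bounded by $\Delta^{-1}$, and because $[H_R^{(x)},W]=0$. The band structure \eqref{eq:m_conditions_ham_general_1}--\eqref{eq:m_conditions_ham_general_2} then bounds the single commutator by $Cs^2\braket{\Psi|W^2|\Psi}$ once you choose $\phi'(\lambda)\lesssim\chi^{-1}(|\lambda|+1)^{-r}$, with closing condition $s\lesssim\min(\Delta,\chi)$ rather than your $s\lesssim\sqrt\Delta$; tracking constants this route gives exponent $\Omega(\chi^{-1}\Delta\,\Lambda^{1-r})$, which dominates the stated square-root bound precisely in the regime where that bound is nontrivial.

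Second, your moment control is wrong as stated. Jensen and concavity give $\braket{\phi(|\lambda_x|)}\leq\phi(\bar\lambda_x)$, but for exponential moments the inequality runs the other way, $\braket{e^{s\phi}}\geq e^{s\braket{\phi}}$, so concavity does not ``pin'' $\braket{e^{s\phi}}$ at the scale of the mean. Moreover the inequality $\braket{e^{2s\phi}}\leq C\braket{e^{s\phi}}^2$ is self-referential: Cauchy--Schwarz gives $\braket{e^{s\phi}}\leq\braket{e^{2s\phi}}^{1/2}$, under which it degenerates to a tautology and bounds nothing on its own. You must break the self-reference, and this is where $\bar\lambda_x$ actually enters: with your capped weight $W_M$, split $\braket{W_M}\leq e^{s\phi(q\bar\lambda_x)}+\braket{W_M^2}^{1/2}\Pr[|\lambda_x|>q\bar\lambda_x]^{1/2}$ and use the first-moment Markov bound $\Pr[|\lambda_x|>q\bar\lambda_x]\leq 1/q$; for a sufficiently large constant $q$ the $\braket{W_M^2}^{1/2}$ term is absorbed by the variance inequality, yielding the $M$-uniform bound $\braket{W_M^2}\leq Ce^{2s\phi(q\bar\lambda_x)}$, after which monotone convergence and Markov applied to $e^{2s\phi(|\lambda_x|)}$ finish as you intend (with $(q\bar\lambda_x)^{1-r}$ in place of $(2\bar\lambda_x)^{1-r}$, immaterial inside the $\Omega$). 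With these two repairs your plan closes. For context: the paper itself does not prove this theorem at all --- it restates \cite[Theorem~12]{TongAlbertMcCleanPreskillSu2021provably}, whose proof goes through truncated time evolution and spectral-filter machinery from quantum simulation --- so a corrected version of your Combes--Thomas-style tilted-weight argument would be a genuinely independent and arguably more elementary derivation.
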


We will further need to apply this result to bound the error that emerges from applying the truncation bound to many sites.  The following corollary provides such a result in a convenient form.
\begin{cor}
\label{cor:multiple_truncation}
Let 
$
\Pi' = \prod_{x=\ell+1}^{\ell+s}\Pi^{(x)}_{[-\Lambda,\Lambda]},
$
then under the same assumption as in Theorem~\ref{thm:tail_bound}, we have
\begin{equation*}
    \|(I-{\Pi}')\ket{\Psi}\| \leq \sqrt{s} e^{-\Omega\left(\sqrt{\chi^{-1}\Delta(\Lambda^{1-r}-(2\bar{\lambda})^{1-r})}\right)},
\end{equation*}
where $\bar{\lambda}=\max_{\ell< x\leq \ell+s} \bar{\lambda}_x$.
\end{cor}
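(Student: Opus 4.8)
The plan is to reduce the multi-site truncation error to a sum of single-site errors and then invoke Theorem~\ref{thm:tail_bound} site by site. The crucial structural observation is that the local projectors $\Pi^{(x)}_{[-\Lambda,\Lambda]}$ for distinct sites $x$ act on distinct tensor factors of the Hilbert space and therefore mutually commute; consequently $\Pi'=\prod_{x=\ell+1}^{\ell+s}\Pi^{(x)}_{[-\Lambda,\Lambda]}$ is itself an orthogonal projector, namely the projector onto the intersection of the ranges of the individual $\Pi^{(x)}_{[-\Lambda,\Lambda]}$.

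First I would diagonalize the commuting family $\{\Pi^{(x)}_{[-\Lambda,\Lambda]}\}_{x=\ell+1}^{\ell+s}$ simultaneously. Its common eigenspaces are indexed by binary strings $b\in\{0,1\}^s$, where $b_x=1$ records that the corresponding component of $\ket\Psi$ lies in the range of $\Pi^{(x)}_{[-\Lambda,\Lambda]}$ and $b_x=0$ that it lies in the range of $I-\Pi^{(x)}_{[-\Lambda,\Lambda]}$. Writing $\Pi_b$ for the projector onto the eigenspace labeled $b$ and $p_b=\|\Pi_b\ket\Psi\|^2$ for the associated weight (so that $\sum_b p_b=1$), the operator $\Pi'$ projects onto the single string $b=(1,\dots,1)$, so $I-\Pi'$ projects onto all strings with at least one zero entry. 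Hence
\[
\|(I-\Pi')\ket\Psi\|^2=\sum_{b:\,\exists x,\,b_x=0}p_b.
\]

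The key step is a union bound at the level of these squared weights, which is a Pythagorean-type estimate rather than a naive triangle inequality and is precisely what produces the factor $\sqrt s$ rather than $s$. Since $\{b:\exists x,\,b_x=0\}=\bigcup_x\{b:b_x=0\}$ and $\sum_{b:\,b_x=0}p_b=\|(I-\Pi^{(x)}_{[-\Lambda,\Lambda]})\ket\Psi\|^2$, one obtains
\[
\|(I-\Pi')\ket\Psi\|^2\le\sum_{x=\ell+1}^{\ell+s}\|(I-\Pi^{(x)}_{[-\Lambda,\Lambda]})\ket\Psi\|^2.
\]
I then apply Theorem~\ref{thm:tail_bound} to each site. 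Because $r<1$ (the regime in which the tail bound is nontrivial), the map $t\mapsto\Lambda^{1-r}-(2t)^{1-r}$ is nonincreasing, so replacing $\bar\lambda_x$ by the larger $\bar\lambda=\max_{\ell<x\le\ell+s}\bar\lambda_x$ only weakens each bound. This yields the site-independent estimate $\|(I-\Pi^{(x)}_{[-\Lambda,\Lambda]})\ket\Psi\|\le e^{-\Omega(\sqrt{\chi^{-1}\Delta(\Lambda^{1-r}-(2\bar\lambda)^{1-r})})}$ for every $x$ in the range. Summing $s$ identical upper bounds on the squared norms and taking a square root gives the claimed $\sqrt s$ prefactor.

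The only genuine obstacle is obtaining $\sqrt s$ rather than $s$: a direct triangle-inequality telescoping of $I-\prod_x\Pi^{(x)}_{[-\Lambda,\Lambda]}$ would lose a square root. Overcoming this is exactly what the commutativity of the distinct-site projectors buys us, since it lets us pass to the squared-norm (probability) picture in which the single-site errors add in squares. The remaining ingredients, namely that distinct-site projectors commute and that the monotonicity of $t\mapsto t^{1-r}$ for $r<1$ permits replacing $\bar\lambda_x$ by $\bar\lambda$, are routine.
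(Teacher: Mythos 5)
Your proof is correct, and it takes a mildly different route from the paper's. The paper also works at the squared-norm level, but via a different decomposition: it writes $\|(I-\Pi')\ket\Psi\|^2=\bra\Psi(I-\Pi')\ket\Psi$ using the projector identity $(I-\Pi')^2=I-\Pi'$, telescopes $I-\Pi'=\sum_{x=\ell+1}^{\ell+s}\bigl(\prod_{x'=\ell+1}^{x-1}\Pi^{(x')}_{[-\Lambda,\Lambda]}\bigr)(I-\Pi^{(x)}_{[-\Lambda,\Lambda]})$ inside this quadratic form, and bounds each term by $\|(I-\Pi^{(x)}_{[-\Lambda,\Lambda]})\ket\Psi\|$ via Cauchy--Schwarz, whereas you simultaneously diagonalize the commuting family, decompose $\ket\Psi$ over the $2^s$ common eigenspaces, and apply a union bound to the weights $p_b$. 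Both arguments exploit the same mechanism (passing to squared norms to beat the naive triangle-inequality factor of $s$), but yours is quantitatively slightly sharper: you obtain $\|(I-\Pi')\ket\Psi\|^2\le\sum_x\|(I-\Pi^{(x)}_{[-\Lambda,\Lambda]})\ket\Psi\|^2\le s\,\epsilon^2$ with $\epsilon$ the single-site tail bound, while the paper's chain ends with $\|(I-\Pi')\ket\Psi\|^2\le\sum_x\|(I-\Pi^{(x)}_{[-\Lambda,\Lambda]})\ket\Psi\|\le s\,\epsilon$, i.e., it effectively loses a factor of $2$ in the exponent, which is then absorbed into the $\Omega(\cdot)$ notation, so both yield the stated $\sqrt s\,e^{-\Omega(\cdot)}$. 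Your explicit verification that replacing $\bar\lambda_x$ by $\bar\lambda=\max_{\ell<x\le\ell+s}\bar\lambda_x$ only weakens each bound (monotonicity of $t\mapsto(2t)^{1-r}$ for $r<1$, the regime covering the paper's examples $r=0$ and $r=1/2$) is a detail the paper handles silently; it is correct and needed for the uniform single-site estimate.
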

\begin{proof}
Proof follows from~Theorem~\ref{thm:tail_bound} via a straightforward application of the triangle inequality and the sub-multiplicative property of the spectral norm:
\begin{align}
    \|(I-{\Pi}')\ket{\Psi}\| &= \sqrt{\bra{\Psi}(I-{\Pi}')^2\ket{\Psi}}=\sqrt{\bra{\Psi}(I-{\Pi}')\ket{\Psi}}\nonumber\\
    &=\sqrt{ \sum_{x=\ell+1}^{\ell +s} \bra{\Psi}\prod_{x'=\ell+1}^{x-1}\Pi_{[-\Lambda,\Lambda]}^{(x')}(I - \Pi^{(x)}_{[-\Lambda,\Lambda]})\ket{\Psi}}\nonumber\\
    &\le\sqrt{ \sum_{x=\ell+1}^{\ell +s} \|(I - \Pi^{(x)}_{[-\Lambda,\Lambda]})\ket{\Psi}\|}\le \sqrt{s} e^{-\Omega\left(\sqrt{\chi^{-1}\Delta(\Lambda^{1-r}-(2\bar{\lambda})^{1-r})}\right)}. 
\end{align}
\end{proof}
A final important consequence of Theorem~\ref{thm:tail_bound} is a bound on the quantity $\|\Pi'H(I-{\Pi}')\ket{\Psi}\|$ where $\Pi'$ is defined as in Corollary~\ref{cor:multiple_truncation}.  We use the above corollary to derive such a bound below.
\begin{cor}
\label{cor:truncation_and_ham_error}
Let 
$
\Pi' = \prod_{x=\ell+1}^{\ell+s}\Pi^{(x)}_{[-\Lambda,\Lambda]},
$
then under the same assumption as in Theorem~\ref{thm:tail_bound}, we have
\begin{equation*}
    \|\Pi' H(I-{\Pi}')\ket{\Psi}\| \leq s^{3/2} \mathcal{N}(\Lambda) e^{-\Omega\left(\sqrt{\chi^{-1}\Delta(\Lambda^{1-r}-(2\bar{\lambda})^{1-r})}\right)},
\end{equation*}
where $\bar{\lambda}=\max_{\ell< x\leq \ell+s} \bar{\lambda}_x$.
\end{cor}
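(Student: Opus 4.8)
The plan is to expand $H=\sum_{x=1}^{N}H_x$ into its geometrically local terms and exploit that $\Pi'$ is a projector, so that $\Pi'(I-\Pi')=0$. Consequently, every term $H_x$ that commutes with $\Pi'$ contributes nothing, since $\Pi' H_x(I-\Pi')=H_x\Pi'(I-\Pi')=0$. As $\Pi'$ is supported only on the sites $\{\ell+1,\dots,\ell+s\}$ while each $H_x$ is supported on the two adjacent sites $x-1,x$, the only terms that fail to commute with $\Pi'$ are those touching the truncated block, namely $x\in\{\ell+1,\dots,\ell+s+1\}$. Thus at most $s+1=\Or(s)$ terms survive, and I would write
\begin{equation*}
\Pi' H(I-\Pi')\ket\Psi=\sum_{x=\ell+1}^{\ell+s+1}\Pi' H_x(I-\Pi')\ket\Psi .
\end{equation*}

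For a single surviving term I would use $\|\Pi' H_x(I-\Pi')\ket\Psi\|\le\|\Pi' H_x\|\,\|(I-\Pi')\ket\Psi\|$ and control the operator norm $\|\Pi' H_x\|$ by the truncated local norm $\mathcal N(\Lambda)$ of Definition~\ref{defn:local}. To do so, factor $\Pi'=\Pi_x^R\,\Pi'_{\mathrm{rest}}$, where $\Pi_x^R$ collects the single-site projectors on the sites of $\{x-1,x\}$ lying in the block and $\Pi'_{\mathrm{rest}}$ collects the remaining ones. Since $\Pi'_{\mathrm{rest}}$ is supported away from $\mathrm{supp}(H_x)$ it commutes with $H_x$, giving $\Pi' H_x=\Pi_x^R H_x\Pi'_{\mathrm{rest}}$ and hence $\|\Pi' H_x\|\le\|\Pi_x^R H_x\|$ via $\|\Pi'_{\mathrm{rest}}\|\le1$. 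For the \emph{interior} terms, whose entire support $\{x-1,x\}$ is truncated by $\Pi_x^R$, this is precisely the truncated local norm, so $\|\Pi_x^R H_x\|\le\mathcal N(\Lambda)$ and each such term is bounded by $\mathcal N(\Lambda)\,\|(I-\Pi')\ket\Psi\|$. Invoking Corollary~\ref{cor:multiple_truncation} to bound $\|(I-\Pi')\ket\Psi\|\le\sqrt{s}\,e^{-\Omega(\sqrt{\chi^{-1}\Delta(\Lambda^{1-r}-(2\bar\lambda)^{1-r})})}$ and summing the $\Or(s)$ terms with the triangle inequality then produces the advertised factor $s\cdot\sqrt s=s^{3/2}$ multiplying $\mathcal N(\Lambda)$ and the exponential tail.

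The hard part will be the two terms $H_{\ell+1}$ and $H_{\ell+s+1}$ that straddle the boundary of the block: each couples a truncated site to a neighboring site $y$ outside the block whose quantum number $\Pi'$ does not constrain, so the factorization above leaves $H_x$ un-tamed on $y$ and $\|\Pi_x^R H_x\|$ need not be bounded by $\mathcal N(\Lambda)$. To handle these I would split $(I-\Pi')\ket\Psi=\Pi^{(y)}(I-\Pi')\ket\Psi+(I-\Pi^{(y)})(I-\Pi')\ket\Psi$, which is legitimate because $\Pi^{(y)}$ commutes with $\Pi'$. In the first piece both sites of $\mathrm{supp}(H_x)$ now carry a truncation projector—one supplied by $\Pi'$, the other by $\Pi^{(y)}$—so the norm bound $\mathcal N(\Lambda)$ applies exactly as for the interior terms. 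The second piece carries the additional factor $\|(I-\Pi^{(y)})\ket\Psi\|$, which is exponentially small by Theorem~\ref{thm:tail_bound} (using the finiteness of $\bar\lambda_y$ on the neighboring site), and the residual unbounded action of $H_x$ on this tail is absorbed by summing the polynomially growing shell norms against the faster-decaying quantum-number tail. Collecting both pieces restores the bound $\mathcal N(\Lambda)\,\|(I-\Pi')\ket\Psi\|$ for the boundary terms as well, and the overall triangle-inequality sum yields $\|\Pi' H(I-\Pi')\ket\Psi\|\le s^{3/2}\mathcal N(\Lambda)\,e^{-\Omega(\sqrt{\chi^{-1}\Delta(\Lambda^{1-r}-(2\bar\lambda)^{1-r})})}$.
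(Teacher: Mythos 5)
Your argument follows the paper's own proof almost step for step: the paper likewise expands $H=\sum_x H_x$, discards the terms that commute with $\Pi'$, bounds each surviving term in operator norm by $\mathcal{N}(\Lambda)$ to get $\|\Pi' H(I-\Pi')\|\le s\,\mathcal{N}(\Lambda)$, and multiplies by the tail $\|(I-\Pi')\ket\Psi\|\le\sqrt{s}\,e^{-\Omega(\sqrt{\chi^{-1}\Delta(\Lambda^{1-r}-(2\bar\lambda)^{1-r})})}$ from Corollary~\ref{cor:multiple_truncation}, giving $s\cdot\sqrt s=s^{3/2}$. One simplification you missed: Definition~\ref{defn:local} defines $\mathcal{N}(\Lambda)$ with a \emph{single} projector, $\mathcal{N}(\Lambda)=\max_x\|H_x\Pi^{(x)}_{[-\Lambda,\Lambda]}\|$, so for any term whose designated site $x$ lies in the block you can write $\Pi'=\Pi^{(x)}_{[-\Lambda,\Lambda]}\Pi'_{\mathrm{rest}}$ and conclude $\|\Pi' H_x(I-\Pi')\|\le\|\Pi^{(x)}_{[-\Lambda,\Lambda]}H_x\|\le\mathcal{N}(\Lambda)$ directly; truncating the second support site is unnecessary (prepending further projectors can only decrease the norm), and this is exactly what the paper does. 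In particular the left boundary term $H_{\ell+1}$, which you classify as ``hard,'' is covered for free since its designated site $\ell+1$ is inside the block.

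Where you genuinely diverge from the paper is the right boundary term, and there you are in fact more careful than the paper itself: the paper's claim that $\Pi' H_x(I-\Pi')=0$ for all $x\notin\{\ell+1,\ldots,\ell+s\}$ silently drops $H_{\ell+s+1}$, whose support $\{\ell+s,\ell+s+1\}$ meets the block, so it does not commute with $\Pi'$, and its designated projector $\Pi^{(\ell+s+1)}$ is absent from $\Pi'$, so the uniform $\mathcal{N}(\Lambda)$ bound does not apply to it as written. Your fix is the right one: inserting $\Pi^{(y)}_{[-\Lambda,\Lambda]}$ on the outside neighbor $y=\ell+s+1$ makes the first piece match the definition exactly, $\|\Pi' H_{\ell+s+1}\Pi^{(y)}_{[-\Lambda,\Lambda]}\|\le\|H_{\ell+s+1}\Pi^{(y)}_{[-\Lambda,\Lambda]}\|\le\mathcal{N}(\Lambda)$. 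However, your handling of the second piece is only a sketch; ``absorbed by summing the polynomially growing shell norms'' should be made explicit via a shell decomposition $I-\Pi^{(y)}_{[-\Lambda,\Lambda]}=\sum_{\Lambda'\ge\Lambda}\bigl(\Pi^{(y)}_{[-\Lambda'-1,\Lambda'+1]}-\Pi^{(y)}_{[-\Lambda',\Lambda']}\bigr)$, with each shell contributing $\mathcal{N}(\Lambda'+1)=\poly(\Lambda')$ (Assumption~\ref{assumption:local_quantum_numbers}) against the amplitude $e^{-\Omega(\sqrt{\chi^{-1}\Delta(\Lambda'^{1-r}-(2\bar\lambda_y)^{1-r})})}$ from Theorem~\ref{thm:tail_bound}; the stretched exponential dominates the polynomial, so the sum retains the claimed order. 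One bookkeeping consequence you should state: this route invokes the tail bound at the site just outside the block, so your $\bar\lambda$ must run over $\ell<x\le\ell+s+1$ rather than the corollary's stated range---harmless in practice, since Section~\ref{sec:mean_abs_local_quantum_number_bound} bounds $\bar\lambda_x$ uniformly over all sites.
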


\begin{proof}
We only need to use Corollary~\ref{cor:multiple_truncation} along with a bound for $\|\Pi' H(I-{\Pi}')\|$. We have
\begin{equation}
    \Pi' H(I-{\Pi}') = \sum_{x=1}^N \Pi' H_x (I-{\Pi}') = \sum_{x=\ell+1}^{\ell+s} \Pi' H_x (I-{\Pi}').
\end{equation}
The second equality is because $\Pi' H_x (I-{\Pi}')=\Pi' (I-{\Pi}') H_x =0$ if $x\notin\{\ell+1,\ldots,\ell+s\}$. Therefore
\begin{equation}
    \|\Pi' H(I-{\Pi}')\| \leq \sum_{x=\ell+1}^{\ell+s} \|\Pi' H_x (I-{\Pi}')\|\leq \sum_{x=\ell+1}^{\ell+s} \|\Pi^{(x)}_{[-\Lambda,\Lambda]} H_x\| \leq s\mathcal{N}(\Lambda).
\end{equation}
\end{proof}

We remark that the bound in the above corollary does not depend on the system size $N$. However, there is a dependence on $\bar{\lambda}$, an upper bound on the mean absolute value of the local quantum numbers on sites $\ell+1,\ell+2,\ldots,\ell+s$. 
One might worry that $\bar{\lambda}$ will show up as an independent parameter in our expression of the entanglement entropy.
However, we will show in the next section that for the ground states of U(1) and SU(2) LGTs as well as the Hubbard-Holstein model, $\bar{\lambda}$ depends only on the coefficients in the Hamiltonian, and is thus independent of the system size.

\section{Bounding the mean absolute value of the local quantum number}
\label{sec:mean_abs_local_quantum_number_bound}

In this section we prove bounds on the mean absolute value $\bar{\lambda}_x=\braket{|\lambda_x|}$ of the local quantum numbers in the ground states of U(1) and SU(2) LGTs, as well as the Hubbard-Holstein model. $\lambda_x$ here is the local quantum number of site $x$ and $\braket{\cdot}$ denotes the ground state expectation value. We will drop the subscript $x$ in this section because we will focus on only a single bosonic mode or gauge link. These bounds are independent of the system size and only depend on the coefficients in the Hamiltonians. 

In the following discussion, we view our lattice models as a bipartite system, where a subsystem $A$ is a gauge link or bosonic mode, and a subsystem $B$ is the rest of the system. We then bound $\braket{|\lambda|}$, where $\lambda$ is the local quantum number for $A$, using the variational principle.

\begin{lem}
\label{lem:local_quantum_number_bound_first_step}
Consider a bipartite system with Hilbert space $\mathcal{H}=\mathcal{H}_A\otimes\mathcal{H}_B$, where $\mathcal{H}_A$ and $\mathcal{H}_B$ are the Hilbert spaces for subsystems $A$ and $B$ respectively.
Let $H=H_A+H_{AB}+H_B$ be the Hamiltonian, where $H_A$ acts non-trivially only on $A$, and $H_B$ on $B$. Let $\ket{\Psi}$ be the ground state of $H$. Furthermore, we assume that there exists an operator $K_A$ acting non-trivially only on $A$ such that $|H_{AB}|\preceq K_A$. Then we have
\begin{equation}
\label{eq:local_expectation_value_bound}
    \braket{\Psi|\left(H_A-K_A\right)|\Psi}\leq\braket{\Psi_A|\left(H_A+K_A\right)|\Psi_A},
\end{equation}
for any $\ket{\Psi_A}\in\mathcal{H}_A$.
Here $\preceq$ denotes the partial order induced by the convex cone of positive semi-definite operators and $|O|=\sqrt{O^{\dagger}O}$ for operator $O$.
\end{lem}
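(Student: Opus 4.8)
The plan is to use the variational characterization of the ground state together with the hypothesis $|H_{AB}|\preceq K_A$. Since $H_{AB}$ is Hermitian we have $-|H_{AB}|\preceq H_{AB}\preceq |H_{AB}|$, so the hypothesis upgrades to the two-sided operator bound $-K_A\preceq H_{AB}\preceq K_A$ on the full space (with $K_A$ understood as $K_A\otimes I_B$). This is the only place the assumption enters, and it will be used once with each sign.

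First I would bound the left-hand side from above. Splitting $\braket{\Psi|H|\Psi}=\braket{\Psi|H_A|\Psi}+\braket{\Psi|H_{AB}|\Psi}+\braket{\Psi|H_B|\Psi}$ and applying $H_{AB}\succeq -K_A$ gives $\braket{\Psi|H_{AB}|\Psi}\geq -\braket{\Psi|K_A|\Psi}$, hence $\braket{\Psi|(H_A-K_A)|\Psi}\leq \braket{\Psi|H|\Psi}-\braket{\Psi|H_B|\Psi}$. Next I upper-bound $\braket{\Psi|H|\Psi}$ through the variational principle using a product trial state $\ket{\Phi}=\ket{\Psi_A}\otimes\ket{\chi_B}$ with $\ket{\Psi_A}$ and $\ket{\chi_B}$ normalized; since $\ket{\Psi}$ is the ground state, $\braket{\Psi|H|\Psi}\leq\braket{\Phi|H|\Phi}$. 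Expanding $\braket{\Phi|H|\Phi}$ and using $H_{AB}\preceq K_A$ together with the fact that $K_A$ acts only on $A$, the cross term obeys $\braket{\Phi|H_{AB}|\Phi}\leq\braket{\Psi_A|K_A|\Psi_A}$, so $\braket{\Phi|H|\Phi}\leq\braket{\Psi_A|(H_A+K_A)|\Psi_A}+\braket{\chi_B|H_B|\chi_B}$. Chaining the two bounds yields $\braket{\Psi|(H_A-K_A)|\Psi}\leq\braket{\Psi_A|(H_A+K_A)|\Psi_A}+\bigl(\braket{\chi_B|H_B|\chi_B}-\braket{\Psi|H_B|\Psi}\bigr)$, so it only remains to select $\ket{\chi_B}$ making the final parenthesis nonpositive.

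To close the argument I would pass to the reduced density matrix $\rho_B=\Tr_A\ket{\Psi}\bra{\Psi}$ and write its spectral decomposition $\rho_B=\sum_i p_i\ket{\chi_i}\bra{\chi_i}$. Then $\braket{\Psi|H_B|\Psi}=\Tr(H_B\rho_B)=\sum_i p_i\braket{\chi_i|H_B|\chi_i}$ is a convex combination, so at least one eigenvector satisfies $\braket{\chi_i|H_B|\chi_i}\leq\braket{\Psi|H_B|\Psi}$; taking $\ket{\chi_B}=\ket{\chi_i}$ gives the desired inequality. The main obstacle is precisely this last step: because $H_B$ may be unbounded below, one cannot simply feed the ground state of $H_B$ into the trial state, and one must instead exploit the finiteness of $\braket{\Psi|H_B|\Psi}$ (guaranteed since $\ket{\Psi}$ is a genuine eigenstate of $H$) via the averaging over the support of $\rho_B$. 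Care is needed to ensure $\braket{\chi_B|H_B|\chi_B}$ is finite, which holds for eigenvectors of $\rho_B$ lying in its support; this is what makes the support-restricted selection well-posed in the infinite-dimensional setting.
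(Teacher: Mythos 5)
Your proof is correct and follows the same variational skeleton as the paper's: compare $\braket{\Psi|H|\Psi}$ against a product trial state $\ket{\Psi_A}\otimes\ket{\chi_B}$, and use the two-sided bound $-K_A\preceq H_{AB}\preceq K_A$ once with each sign so that the $H_B$-contributions cancel. The one genuine difference is the closing step. The paper simply takes $\ket{\chi_B}=\ket{\Psi_B}$ to be the ground state of $H_B$, so that $\braket{\Psi|H_B|\Psi}\geq\braket{\Psi_B|H_B|\Psi_B}$ is immediate; this tacitly assumes $H_B$ attains its spectral infimum with an actual eigenvector, which in the infinite-dimensional setting need not hold (the infimum can lie in the essential spectrum). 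Your substitute --- writing $\braket{\Psi|H_B|\Psi}=\Tr(H_B\rho_B)=\sum_i p_i\braket{\chi_i|H_B|\chi_i}$ over the spectral decomposition of $\rho_B$ and selecting an eigenvector with below-average energy --- removes that assumption: the Schmidt decomposition of $\ket{\Psi}$ makes the cross terms vanish, so the identity is legitimate, and when $H_B$ is bounded below (the only case where the paper's proof makes sense anyway) one can shift it to be nonnegative and justify the sum by monotone convergence, which also yields the finiteness of each $\braket{\chi_i|H_B|\chi_i}$ on the support of $\rho_B$, as you note. So your version is marginally more general and more careful about the functional-analytic fine print; the paper's version is shorter, and could alternatively be repaired in the non-attained case by feeding in an $\epsilon$-approximate ground state of $H_B$ and letting $\epsilon\to 0$, since only $\braket{\Psi|H_B|\Psi}\geq\inf\operatorname{spec}(H_B)$ is actually needed.
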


\begin{proof}

Let $\ket{\Psi_P}=\ket{\Psi_A}\ket{\Psi_B}$ be a product state, where $\ket{\Psi_B}\in\mathcal{H}_B$ is the ground state of $H_B$, and $\ket{\Psi_A}\in\mathcal{H}_A$ is chosen arbitrarily. Then because $\ket{\Psi}$ is the ground state of $H$ we have
\begin{equation}
\label{eq:variational_principle}
    \braket{\Psi|H|\Psi}\leq \braket{\Psi_P|H|\Psi_P}.
\end{equation}

For $\braket{\Psi|H|\Psi}$ we have
\begin{equation}
\label{eq:variational_principle_LHS}
    \begin{aligned}
    \braket{\Psi|H|\Psi} &= \braket{\Psi|H_A|\Psi} + \braket{\Psi|H_{AB}|\Psi} + \braket{\Psi|H_B|\Psi} \\
    &\geq \braket{\Psi|H_A|\Psi} - \braket{\Psi|K_A|\Psi} + \braket{\Psi|H_B|\Psi} \\
    &\geq \braket{\Psi|\left(H_A-K_A\right)|\Psi} + \braket{\Psi_B|H_B|\Psi_B},
    \end{aligned}
\end{equation}
where in the first inequality we have used $|H_{AB}|\preceq K_A$ and in the second inequality we have used  $\braket{\Psi|H_B|\Psi}\geq \braket{\Psi_B|H_B|\Psi_B}$, which is true because $\ket{\Psi_B}$ is chosen to be the ground state of $H_B$.

For $\braket{\Psi_P|H|\Psi_P}$ we have
\begin{equation}
\label{eq:variational_principle_RHS}
    \begin{aligned}
    \braket{\Psi_P|H|\Psi_P} &= \braket{\Psi_P|H_A|\Psi_P} + \braket{\Psi_P|H_{AB}|\Psi_P} + \braket{\Psi_B|H_B|\Psi_B} \\ 
    &\leq \braket{\Psi_P|\left(H_A+K_A\right)|\Psi_P} + \braket{\Psi_B|H_B|\Psi_B}  \\
    &= \braket{\Psi_A|\left(H_A+K_A\right)|\Psi_A} + \braket{\Psi_B|H_B|\Psi_B}.
    \end{aligned}
\end{equation}

Combining \eqref{eq:variational_principle}, \eqref{eq:variational_principle_LHS}, \eqref{eq:variational_principle_RHS}, we have \eqref{eq:local_expectation_value_bound}.

\end{proof}

\begin{lem}
\label{lem:local_quantum_number_bound_general}
Under the same assumptions as in Lemma \ref{lem:local_quantum_number_bound_first_step}, we further let $\Xi$ be a Hermitian operator on $A$. Assume that $H_A-K_A\succeq L(|\Xi|)$ where $L(x)$ is a convex non-decreasing function for $x\in\RR^+$ satisfying $L(x)\to+\infty$ when $x\to+\infty$. Then we have
\begin{equation}
\label{eq:mean_abs_bound_general}
    \braket{\Psi||\Xi||\Psi}
    \leq L^{-1}\left(\braket{\Psi|\left(H_A-K_A\right)|\Psi}\right)
    \leq L^{-1}\left(\min_{\ket{\Psi_A}\in\mathcal{H}_A}\braket{\Psi_A|\left(H_A+K_A\right)|\Psi_A}\right).
\end{equation}
\end{lem}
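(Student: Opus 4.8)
The plan is to split the statement into its two inequalities and prove each separately. The first inequality, $\braket{\Psi||\Xi||\Psi}\leq L^{-1}(\braket{\Psi|(H_A-K_A)|\Psi})$, is the real content and rests on convexity of $L$ via Jensen's inequality. The second inequality, $L^{-1}(\braket{\Psi|(H_A-K_A)|\Psi})\leq L^{-1}(\min_{\ket{\Psi_A}}\braket{\Psi_A|(H_A+K_A)|\Psi_A})$, is then immediate from \cref{lem:local_quantum_number_bound_first_step} together with monotonicity of the (generalized) inverse $L^{-1}$.

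For the first inequality I would first use the operator hypothesis $H_A-K_A\succeq L(|\Xi|)$ to replace $\braket{\Psi|(H_A-K_A)|\Psi}$ by the smaller quantity $\braket{\Psi|L(|\Xi|)|\Psi}$, since taking the expectation in the fixed state $\ket{\Psi}$ is order-preserving. It then suffices to show $L\big(\braket{\Psi||\Xi||\Psi}\big)\leq\braket{\Psi|L(|\Xi|)|\Psi}$. Here I would invoke the spectral theorem for the positive operator $|\Xi|$ (viewed as $|\Xi|\otimes I_B$): writing $|\Xi|=\int t\,dP_t$, the vector $\ket{\Psi}$ induces a probability measure $d\nu(t)=\braket{\Psi|dP_t|\Psi}$ on $[0,\infty)$, of total mass $1$ because $\int dP_t=I_A$. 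Under this measure $\braket{\Psi||\Xi||\Psi}=\int t\,d\nu(t)$ and $\braket{\Psi|L(|\Xi|)|\Psi}=\int L(t)\,d\nu(t)$, so the claim is exactly the scalar Jensen inequality for the convex function $L$. Rearranging with $L^{-1}$ then yields the first inequality. Note that the fact that $|\Xi|$ acts only on $A$ while $\ket{\Psi}$ lives on $\mathcal{H}_A\otimes\mathcal{H}_B$ causes no difficulty, since $\braket{\Psi|(\,\cdot\,\otimes I_B)|\Psi}$ still defines a bona fide probability measure on the spectrum of $|\Xi|$.

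The second inequality is then routine. I would define $L^{-1}(y)=\sup\{x\geq0:L(x)\leq y\}$, which is well-defined and non-decreasing precisely because $L$ is non-decreasing with $L(x)\to+\infty$; this generalized inverse satisfies both the order-reversal property $L(a)\leq b\Rightarrow a\leq L^{-1}(b)$ (used to rearrange Jensen above) and monotonicity in its argument. \cref{lem:local_quantum_number_bound_first_step} gives $\braket{\Psi|(H_A-K_A)|\Psi}\leq\braket{\Psi_A|(H_A+K_A)|\Psi_A}$ for every $\ket{\Psi_A}\in\mathcal{H}_A$, hence the same bound with the minimum over $\ket{\Psi_A}$ on the right; applying the monotone $L^{-1}$ to both sides yields the second inequality.

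I expect the only genuine subtlety to be the bookkeeping around $L^{-1}$: since $L$ is assumed merely convex and non-decreasing rather than strictly increasing, $L^{-1}$ must be read as the generalized inverse above, and one must verify that the two properties I rely on (order reversal and monotonicity) both hold for that definition. The Jensen step itself is standard once the spectral measure $d\nu$ is set up, so the proof as a whole is short, with essentially all the care concentrated in handling $L^{-1}$ correctly.
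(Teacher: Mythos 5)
Your proposal is correct and takes essentially the same route as the paper's proof: bound $\braket{\Psi|L(|\Xi|)|\Psi}$ using $H_A-K_A\succeq L(|\Xi|)$ together with Lemma~\ref{lem:local_quantum_number_bound_first_step}, apply Jensen's inequality for the convex $L$, and finish by monotonicity of $L^{-1}$. Your additional care (setting up the spectral measure $d\nu$ for the Jensen step and defining $L^{-1}$ as a generalized inverse) simply makes explicit what the paper treats as standard.
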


\begin{proof}
By the assumption that $H_A-K_A\succeq L(|\Xi|)$ and \eqref{eq:local_expectation_value_bound}, we have
\[
\braket{\Psi|L(|\Xi|)|\Psi} \leq \braket{\Psi_A|\left(H_A+K_A\right)|\Psi_A}.
\]
Because $L$ is convex, by Jensen's inequality $L(\braket{\Psi||\Xi||\Psi})\leq\braket{\Psi|L(|\Xi|)|\Psi}$. Because $L$ is non-decreasing and $\ket{\Psi_A}$ can be arbitrarily chosen, we have \eqref{eq:mean_abs_bound_general}.
\end{proof}

Note that the right-hand side of \eqref{eq:mean_abs_bound_general} is independent of the subsystem $B$. This allows us to bound the mean absolute value of the local quantum number in the lattice models in a way that is independent of the system size.
We will now apply this lemma to the case of gauge theories and the Hubbard-Holstein model. 

First, for U(1) and SU(2) gauge theories with the Hamiltonian defined in \eqref{eq:LGT_ham}, we take $A$ to be a gauge link $x$. Then we let
\[
    H_A = g_E E_x^2,\qquad H_{AB} = g_{GM}(\phi_x^{\dagger}U_x\phi_{x+1}+\phi_{x+1}^{\dagger}U_x^{\dagger}\phi_{x}),\qquad H_{B}=H-H_A-H_B.
\]
A nice feature about this Hamiltonian is that $H_{AB}$ is bounded: $\|H_{AB}\|\leq 2|g_{GM}|$. Therefore we can simply choose $K_A=2|g_{GM}|$. The local quantum number $\lambda_x$ in this case is the electric field value in the U(1) case and the total angular momentum in the SU(2) case. But $E_x^2 = \lambda_x^2$ for U(1) LGT and $E_x^2 = \lambda_x(\lambda_x+1)\succeq \lambda_x^2$ for SU(2) LGT (for the latter see section \ref{sec:opSU2}\comment{\eqref{eq:operators_SU(2)}} below. Also $\lambda_x\succeq 0$ for SU(2) LGT). Therefore we have
\[
H_A - K_A \succeq g_E \lambda_x^2 - 2|g_{GM}|
\]
for both cases. For the right-hand side of \eqref{eq:mean_abs_bound_general} we have
\[
\min_{\ket{\Psi_A}\in\mathcal{H}_A}\braket{\Psi_A|\left(H_A+K_A\right)|\Psi_A} = 2|g_{GM}|,
\]
where the minimum is attained by $\ket{\Psi_A}=\ket{0}$.
Combining the above facts, we apply Lemma~\ref{lem:local_quantum_number_bound_general} to get:
\begin{cor}
\label{cor:mean_abs_bound_LGT}
For the U(1) and SU(2) LGTs with the Hamiltonian defined in \eqref{eq:LGT_ham}, let $\lambda_x$ be the local quantum number on gauge link $x$, then $\braket{|\lambda_x|}\leq 2\sqrt{|g_{GM}|/g_E}$, where $\braket{\cdot}$ denotes the ground state expectation value.
\end{cor}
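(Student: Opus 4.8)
The plan is to obtain the bound as a direct application of Lemma~\ref{lem:local_quantum_number_bound_general}, so all of the work lies in making the correct identifications and checking its hypotheses. I would take the subsystem $A$ to be the single gauge link $x$ and set $H_A = g_E E_x^2$, $H_{AB} = g_{GM}(\phi_x^{\dagger}U_x\phi_{x+1}+\phi_{x+1}^{\dagger}U_x^{\dagger}\phi_{x})$, and $H_B = H - H_A - H_{AB}$, the latter collecting the electric terms on the other links, the mass term, and the Gauss-law penalty, all of which act trivially on link $x$. The observable whose mean absolute value we wish to control is $\Xi = \lambda_x$.

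First I would discharge the hypothesis $|H_{AB}|\preceq K_A$. Since $U_x$ is unitary and each fermionic factor has operator norm at most $1$, each of the two summands of $H_{AB}$ has spectral norm at most $|g_{GM}|$, so $\|H_{AB}\|\leq 2|g_{GM}|$ and therefore $|H_{AB}|\preceq 2|g_{GM}|\,I$. This lets me take the scalar $K_A = 2|g_{GM}|\,I$, whose being a multiple of the identity keeps the ensuing operator inequality on $A$ effectively one-dimensional.

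Next I would construct the function $L$. The essential ingredient is the relation between $H_A$ and $\Xi$: for the U(1) theory $E_x^2 = \lambda_x^2$ exactly, while for SU(2) the quadratic Casimir gives $E_x^2 = \lambda_x(\lambda_x+1)\succeq\lambda_x^2$ because there $\lambda_x\succeq 0$ (this is the point where I would invoke the explicit SU(2) operators of Section~\ref{sec:opSU2}). In both cases $g_E E_x^2\succeq g_E\lambda_x^2$, whence
\[
H_A - K_A \;\succeq\; g_E\lambda_x^2 - 2|g_{GM}| \;=\; L(|\lambda_x|),\qquad L(t):=g_E t^2 - 2|g_{GM}|,
\]
and $L$ is convex, non-decreasing on $\RR^+$, and diverges as $t\to+\infty$, exactly as the lemma requires.

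Finally I would evaluate the right-hand side of \eqref{eq:mean_abs_bound_general}. The variational quantity $\min_{\ket{\Psi_A}\in\mathcal{H}_A}\braket{\Psi_A|(H_A+K_A)|\Psi_A}=\min_{\ket{\Psi_A}}\braket{\Psi_A|(g_E E_x^2+2|g_{GM}|)|\Psi_A}$ is minimized by the zero-field state $\ket{\Psi_A}=\ket{0}$, since $E_x\ket{0}=0$, giving the value $2|g_{GM}|$. Applying $L^{-1}(y)=\sqrt{(y+2|g_{GM}|)/g_E}$ at $y=2|g_{GM}|$ then yields $\braket{|\lambda_x|}\leq\sqrt{4|g_{GM}|/g_E}=2\sqrt{|g_{GM}|/g_E}$, the claimed bound. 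The only step that is not a mechanical substitution is the operator inequality $E_x^2\succeq\lambda_x^2$ for SU(2): there $\lambda_x$ denotes the total angular momentum of the rotator and the identity $E_x^2=\lambda_x(\lambda_x+1)$ must be read off from the quadratic Casimir of the left/right SU(2) action, so I would treat that identification as the main obstacle while the remainder follows directly from the lemma.
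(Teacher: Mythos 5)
Your proof is correct and follows the paper's own argument essentially verbatim: the same decomposition $H_A=g_E E_x^2$, $H_{AB}=g_{GM}(\phi_x^{\dagger}U_x\phi_{x+1}+\phi_{x+1}^{\dagger}U_x^{\dagger}\phi_x)$, the same choice $K_A=2|g_{GM}|$, the same Casimir inequality $E_x^2=\lambda_x(\lambda_x+1)\succeq\lambda_x^2$ with $\lambda_x\succeq0$ for SU(2), and the same evaluation of the variational minimum $2|g_{GM}|$ at $\ket{\Psi_A}=\ket{0}$, all fed into Lemma~\ref{lem:local_quantum_number_bound_general} (you merely make the function $L(t)=g_E t^2-2|g_{GM}|$ explicit where the paper leaves it implicit). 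One small inaccuracy in your setup: the Gauss-law penalty terms $G_x^2$ and $G_{x+1}^2$ contain $E_x$ and so do \emph{not} act trivially on link $x$; this is harmless here only because the corollary (and the paper's proof) concerns the unpenalized Hamiltonian \eqref{eq:LGT_ham}, for which your argument is exactly right.
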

 Then let us consider the Hubbard-Holstein model with the Hamiltonian described in \eqref{eq:ham_hubbard_holstein}. We let $A$ be a bosonic mode $x$, and let $B$ be the rest of the system. We have
\[
H_A = \omega_0 b_x^{\dagger}b_x,\qquad H_{AB}=g(b_x^{\dagger}+b_x)(n_{x,\uparrow}+n_{x,\downarrow}-1),\qquad H_B = H-H_A-H_{AB}.
\]
Here $H_{AB}$ is no longer bounded, but we can still construct a $K_A$ such that $|H_{AB}|\preceq K_A$.
To simplify the discussion we introduce the position and momentum operators $X$ and $P$:
\[
X = \frac{1}{\sqrt{2}}(b_x^{\dagger}+b_x),\qquad P=\frac{i}{\sqrt{2}}(b_{x}^{\dagger}-b_x).
\]
Then $H_{AB}=\sqrt{2}g X(n_{x,\uparrow}+n_{x,\downarrow}-1)$. Because $\|n_{x,\uparrow}+n_{x,\downarrow}-1\|\leq 1$, we can define
\[
K_A = \sqrt{2}|g| |X|,
\]
which satisfies $|H_{AB}|\preceq K_A$. With this choice of $K_A$ we have
\[
\begin{aligned}
H_A - K_A &= \frac{\omega_0}{2}(X^2+P^2-1)-\sqrt{2}|g||X| \\
&\succeq \frac{\omega_0}{2}(X^2+P^2-1) - \frac{\omega_0}{4}X^2 - \frac{2g^2}{\omega_0} \\
&\succeq \frac{\omega_0}{4}(X^2+P^2-1) - \frac{\omega_0}{4}  - \frac{2g^2}{\omega_0} \\
&= \frac{\omega_0}{2}b_x^{\dagger}b_x - \frac{\omega_0}{4} - \frac{2g^2}{\omega_0}.
\end{aligned}
\]
The local quantum number here is the bosonic occupation number, which has to be non-negative. Therefore
\[
H_A - K_A \succeq \frac{\omega_0}{2}|\lambda_x| - \frac{\omega_0}{4} - \frac{2g^2}{\omega_0}.
\]
For the right-hand side of \eqref{eq:mean_abs_bound_general}, we have
\[
\min_{\ket{\Psi_A}\in\mathcal{H}_A}\braket{\Psi_A|\left(H_A+K_A\right)|\Psi_A}\leq \braket{0|\left(\omega_0b_x^{\dagger}b_x + 2|g||X|\right)|0} = \frac{2|g|}{\sqrt{\pi}},
\]
where we have used the analytic solution of the ground state of the harmonic oscillator in deriving the equality. Combining these results with Lemma~\ref{lem:local_quantum_number_bound_general} we have:
\begin{cor}
\label{cor:mean_abs_bound_Hubbard_Holstein}
For the Hubbard-Holstein model with the Hamiltonian defined in \eqref{eq:ham_hubbard_holstein}, let $\lambda_x$ be the local quantum number on site $x$, then 
\[
\braket{|\lambda_x|}\leq \frac{1}{2}+\frac{4|g|}{\omega_0 \sqrt{\pi}} + \frac{4g^2}{\omega_0^2},
\]
where $\braket{\cdot}$ denotes the ground state expectation value.
\end{cor}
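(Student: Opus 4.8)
The plan is to apply the general variational bound of Lemma~\ref{lem:local_quantum_number_bound_general} with the subsystem $A$ chosen to be the bosonic mode at site $x$ and $B$ the rest of the chain. Reading off the decomposition from \eqref{eq:ham_hubbard_holstein}, I would set $H_A=\omega_0 b_x^\dagger b_x$, $H_{AB}=g(b_x^\dagger+b_x)(n_{x,\uparrow}+n_{x,\downarrow}-1)$, and collect the remaining Fermi--Hubbard and distant-site terms into $H_B$. The relevant local quantum number is the occupation number $\lambda_x=b_x^\dagger b_x$, which is non-negative, so $|\lambda_x|=\lambda_x$ and I may take $\Xi=\lambda_x$ in Lemma~\ref{lem:local_quantum_number_bound_general}.

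The first substantive step is to dominate the coupling---now unbounded, unlike in the gauge-theory case---by an operator supported on $A$ alone. Since the fermionic factor obeys $\|n_{x,\uparrow}+n_{x,\downarrow}-1\|\le 1$, introducing the quadrature $X=\tfrac{1}{\sqrt2}(b_x^\dagger+b_x)$ rewrites the coupling as $H_{AB}=\sqrt2\,gX(n_{x,\uparrow}+n_{x,\downarrow}-1)$ and yields $|H_{AB}|\preceq\sqrt2|g|\,\lvert X\rvert=:K_A$, supplying the hypothesis $|H_{AB}|\preceq K_A$ needed by Lemma~\ref{lem:local_quantum_number_bound_first_step}.

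The crux is to exhibit a convex, non-decreasing, unbounded $L$ with $H_A-K_A\succeq L(\lambda_x)$. Writing $b_x^\dagger b_x=\tfrac12(X^2+P^2-1)$ in terms of the conjugate quadrature $P$, I would control the linear term by completing the square, $\sqrt2|g|\,\lvert X\rvert\preceq\tfrac{\omega_0}{4}X^2+\tfrac{2g^2}{\omega_0}$, sacrificing a quarter of the harmonic confinement to absorb it while retaining enough quadratic energy to bound the remainder below by $\tfrac{\omega_0}{2}\lambda_x$. This produces the affine lower bound $H_A-K_A\succeq\tfrac{\omega_0}{2}\lambda_x-\tfrac{\omega_0}{4}-\tfrac{2g^2}{\omega_0}$, so $L(t)=\tfrac{\omega_0}{2}t-\tfrac{\omega_0}{4}-\tfrac{2g^2}{\omega_0}$ is manifestly convex and increasing with affine inverse. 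To evaluate the right-hand side of \eqref{eq:mean_abs_bound_general} I would test the variational minimum over $\ket{\Psi_A}$ with the oscillator vacuum $\ket{0}$; since $\braket{0|b_x^\dagger b_x|0}=0$ this reduces the minimum to $\braket{0|K_A|0}$, which the Gaussian ground-state wavefunction evaluates to a constant multiple of $|g|$ via the integral $\langle 0|\lvert X\rvert|0\rangle=\pi^{-1/2}$. Inverting the affine $L$ and substituting then gives the stated bound.

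I expect the main obstacle to be the third step. The operator $K_A$ is only linear in $\lvert X\rvert$, so it cannot be compared to the occupation number directly; the argument works only because the confining potential is quadratic and can be split, trading just enough of the $X^2$ energy to Young's inequality to swallow the linear coupling while keeping the rest to dominate $b_x^\dagger b_x$. Fixing this split---and with it the additive constants $\tfrac{\omega_0}{4}+\tfrac{2g^2}{\omega_0}$ that propagate into the final estimate---is the quantitative heart of the proof; once $L$ and the trial state $\ket{0}$ are chosen, the rest is routine.
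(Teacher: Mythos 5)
Your proposal matches the paper's proof essentially step for step: the same $A$/$B$ split with $H_A=\omega_0 b_x^{\dagger}b_x$ and $H_{AB}=\sqrt{2}\,gX(n_{x,\uparrow}+n_{x,\downarrow}-1)$, the same dominating operator $K_A=\sqrt{2}|g|\,\lvert X\rvert$, the same completion of the square giving $H_A-K_A\succeq\tfrac{\omega_0}{2}\lambda_x-\tfrac{\omega_0}{4}-\tfrac{2g^2}{\omega_0}$, and the same vacuum trial state with $\langle 0|\,\lvert X\rvert\,|0\rangle=\pi^{-1/2}$, fed into Lemma~\ref{lem:local_quantum_number_bound_general} with affine $L$. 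Your computation in fact yields the marginally sharper constant $2\sqrt{2}|g|/(\omega_0\sqrt{\pi})$ where the paper loosens $\sqrt{2}|g|$ to $2|g|$ to obtain $4|g|/(\omega_0\sqrt{\pi})$, so the stated bound follows a fortiori.
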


\subsection{SU(2) LGT}
\label{sec:opSU2}
For the SU(2) case the operators $E^2_{x}$ and $U_{x}$ are different from the U(1) case.
The operator $E^2_{x}$ is defined through
\begin{equation}
\label{eq:operators_SU(2)}
    E^2_{x}\ket{jmm'} = j(j+1)\ket{jmm'}.
\end{equation}
Because $\phi_x$ has two components, where each component is a fermionic mode,  $U_{x}$ is a $2\times 2$ matrix, where each of the 4 matrix entries is an operator acting on the link space
\begin{equation}
    U_{x} = \begin{pmatrix}
    U^{11}_{x} & U^{12}_{x}\\
    U^{21}_{x} & U^{22}_{x}
    \end{pmatrix}.
\end{equation}
Given that $U_{x,n_i}$ transforms as the $j=1/2$ representation of $\mathrm{SU}(2)$, the rules for the addition of angular momentum imply
\begin{equation}
    \label{eq:U_SU(2)_properties}
    \begin{aligned}
    \braket{j_1 m_1 m_1'|U^{ll'}_{x}|j_2 m_2 m_2'}&=0,\ \text{if }|j_1-j_2|>1/2,
    \\\|U^{ll'}_{x}\|&\leq 1.
    \end{aligned}
\end{equation}
Here $\|O\|$ denotes the spectral norm of an operator $O$.

\section{Robustness of the ground state to truncation}
\label{sec:ham_truncation}

In this section, we show that the ground state, the ground state energy, and the spectral gap are all robust to the truncation of the Hamiltonian, in a way that we will specify later.
Following \cite{AradKitaevLandauVazirani2013area} we focus on the $s$ sites from $\ell+1$ to $\ell+s$. For convenience we relabel the sites so that the original site $x$ is now labelled $x-\ell$. The Hamiltonian can be rewritten as
\begin{equation}
\label{eq:local_ham_general}
    H = H_L + H_1 + \cdots + H_s + H_R,
\end{equation}
where $H_L = \sum_{x\leq 0} H_x$ and $H_R = \sum_{x\geq s+1} H_x$. We need to shift each Hamiltonian term ($H_L$, $H_1$, $H_2,\ldots,H_s$, $H_R$) by a constant to ensure that they are all positive semi-definite. As in \cite{AradKitaevLandauVazirani2013area} we look at the entanglement entropy across a cut between sites $s/2$ and $s/2+1$. If $\|H_x\|\leq 1$, the local Hilbert space dimension is $d$ and the spectral gap is $\Delta$, then it is known that the entanglement entropy scales as $\Or(\log^3(d)/\Delta)$ \cite[Theorem 6.2]{AradKitaevLandauVazirani2013area}.

Now we want to consider the case where the local Hilbert space dimension is infinite. This compels us to truncate the local Hilbert space, and consequently the local Hamiltonian terms $H_i$ as well. We denote the truncation threshold, defined according to the local quantum number introduced in Section~\ref{sec:the_abstract_model}, by $\Lambda$, and correspondingly the truncated Hamiltonian term by $H_x'$. The truncated Hilbert space dimension is $d(\Lambda)$ and the truncated local term has a norm that is upper bounded by $\mathcal{N}(\Lambda)$. 

\subsection{The two truncations}

We first clarify  in more detail what we mean by truncating the local Hilbert space. The original Hilbert space is $\mathcal{H}=\mathcal{H}_L\otimes \mathcal{H}_1 \otimes \cdots \mathcal{H}_s\otimes \mathcal{H}_R$. We consider a subspace $\mathcal{H}'=\mathcal{H}_L\otimes \mathcal{H}'_1 \otimes \cdots \mathcal{H}'_s\otimes \mathcal{H}_R \subseteq \mathcal{H}$, where each $\mathcal{H}_x$ has dimension $d(\Lambda)$. 
We denote by $\Pi'_x$ the projection operator onto $\mathcal{H}'_x$, and define
\[
\Pi' = I_L \otimes \Pi'_1 \otimes \cdots \Pi'_s \otimes I_R,
\]
which is the projection operator onto $\mathcal{H}'$. $I_L$ and $I_R$ are the identity operators on $\mathcal{H}_L$ and $\mathcal{H}_R$ respectively.
The truncated Hamiltonian is a Hermitian operator $H'$ defined to be the \textit{restriction} of $\Pi'H\Pi'$ to the subspace $\mathcal{H}'$, by which we mean that $H'$ maps elements from $\mathcal{H}'$ to $\mathcal{H}'$. We do not directly define $H'$ to be $\Pi' H\Pi'$ because that would introduce an artificial eigenvalue $0$ corresponding to the part of the Hilbert space that is truncated out.
We can write $H'$ out as
\begin{equation}
    H'=H_L' + H_1' + \cdots + H_s' + H_R', 
\end{equation}
where each $H'_x$ is the restriction of $\Pi' H_x\Pi'$ to the subspace $\mathcal{H}'$, and the same is true for $H'_L$ and $H'_R$.  Locality is preserved in this truncation as $H'_x$ still acts non-trivially on sites $x$ and $x+1$, $H'_L$ on sites to the left of and including site $1$, and $H'_R$ on sites to the right of site $s$.

The second truncation we consider comes from Ref.~\cite{AradKitaevLandauVazirani2013area}. We adopt the definition in \cite{AradKitaevLandauVazirani2013area}, for a Hermitian operator $A$,
\begin{equation}
A^{\leq t} = AP_t + \|AP_t\|(I-P_t),
\end{equation}
where $P_t$ is the projection operator onto the subspace spanned by eigenvectors of $A$ with eigenvalues at most $t$. Then the second truncation yields the Hamiltonian
\begin{equation}\label{eq:Hpp}
H'' = (H_L' + H_1')^{\leq t} + H_2' + \cdots + H_{s-1}' + (H_s' + H_R')^{\leq t}.
\end{equation}
 The goal here is to show that these two truncations (i) preserve the spectral gap up to a constant factor, and (ii) preserve the ground state up to an error exponentially small in $\Lambda$ and $t$.

\subsection{Truncation robustness of the ground state and energy}

\begin{defn}\label{epsdef}For a self-adjoint operator $A$ bounded from below, define the sequence $\epsilon_0(A)\le\epsilon_1(A)\le\ldots$ as follows. Let $\essspec(A)$ be its (closed) essential spectrum and let $K\in\mathbb{N}_0\cup\{\infty\}$ be the number of eigenvalues in $[-\infty,\min\essspec(A))$, including multiplicity. For $k<K$, let $\epsilon_k(A)$ be the $k$-th eigenvalue with multiplicity ($K$ eigenvalues as we start from $k=0$), and for $k\ge K$ let $\epsilon_k(A)=\min\essspec(A)$.\end{defn}
Each $\epsilon_k(A)$ is in the spectrum\footnote{I.e., the set of $\epsilon$ such that $A-\epsilon I$ has no bounded inverse.} of $A$ but is not necessarily an eigenvalue when $\epsilon_k(A)=\min\essspec(A)$. The min-max principle \cite[Theorem 4.10]{te2009mathematical} states that for $k\in\mathbb N_0$, 
 \begin{equation}\label{minmaxthm}
 \epsilon_k(A)=\minmax{\in\mathcal{D}(A)}{k}{A},
 \end{equation}
 where $\mathcal{D}$ is the domain of $A$.

Writing $\epsilon_k=\epsilon_k(H)$, recall that we assume that the Hamiltonian $H$ has a non-degenerate lowest eigenvalue $\epsilon_0$ (the ground state energy), with a unique ground state $\ket{\psi_0}$. We also assume that $\epsilon_0$ is separated from the rest of the spectrum by a gap $\Delta=\epsilon_1-\epsilon_0$.

For truncated Hamiltonians $H'$ and $H''$, we will prove that there exist unique ground states $\ket{\psi_0'}$ and $\ket{\psi_0''}$, corresponding to non-degenerate lowest eigenvalues $\epsilon_0'$ and $\epsilon_0''$, for the two truncated Hamiltonians respectively. Write $\epsilon_k=\epsilon_k(H)$, $\epsilon_k'=\epsilon_k(H')$, and $\epsilon_k''=\epsilon_k(H'')$ and $\Delta'=\epsilon_1'-\epsilon_0'$, $\Delta''=\epsilon_1''-\epsilon_0''$.

\begin{thm}[Robustness to truncations]
\label{thm:robustness}
Let $\Pi'$ be the projection operator onto $\mathcal{H}'$.
Let $\delta_1 = \|(I-\Pi')\ket{\psi_0}\|$, $\delta_2=\|\Pi'H(I-\Pi')\ket{\psi_0}\|$, and 
\begin{equation}
\frac{\delta_2}{1-\delta_1^2} \leq \frac{\Delta}{18},
\end{equation}
then for every cutoff of the local quantum number $\Lambda>0$ there exists 
\begin{equation}
T=\Or\left(\frac{\mathcal{N}(\Lambda)^2}{\Delta}\left(\frac{\epsilon_0+\delta_2/(1-\delta_1^2)}{\Delta}+1\right)\right),
\end{equation}
such that for all $t\geq T$,
\begin{itemize}
\item[(i)] For $H''$, there exists a non-degenerate ground state $\ket{\psi_0''}$ corresponding to the lowest eigenvalue $\epsilon_0''$.
\item[(ii)] $\Delta''=\Omega(\Delta)$;
\item[(iii)] The trace distance between $\ket{\psi_0}$ and $\ket{\psi''_0}$ can be bounded as follows:
\begin{equation}
D(\ket{\psi_0''},\ket{\psi_0})\le \sqrt{\frac{2\delta_2}{\Delta(1-\delta_1^2)}} + e^{-\Omega(t/\mathcal{N}(\Lambda))}.
\end{equation}
\end{itemize}
\end{thm}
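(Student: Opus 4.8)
The plan is to dispatch the two truncations in sequence and combine the errors at the end via the triangle inequality for the trace distance together with additivity of the gap losses. For the first truncation $H\to H'$ the key observation is that $H'$ is the compression of $H$ to the subspace $\mathcal{H}'\subseteq\mathcal{H}$, so the min--max principle \eqref{minmaxthm} gives $\epsilon_k(H')\ge\epsilon_k(H)$ for every $k$; in particular $\epsilon_0'\ge\epsilon_0$ and $\epsilon_1'\ge\epsilon_1$. For the matching upper bound on $\epsilon_0'$ I would feed the normalized trial vector $\ket{\tilde\psi}=\Pi'\ket{\psi_0}/\|\Pi'\ket{\psi_0}\|\in\mathcal{H}'$ into the variational principle: a short computation using $H\ket{\psi_0}=\epsilon_0\ket{\psi_0}$, $\braket{\psi_0|\Pi'|\psi_0}=1-\delta_1^2$, and $|\braket{\psi_0|\Pi'H(I-\Pi')|\psi_0}|\le\delta_2$ (Cauchy--Schwarz) shows $\braket{\tilde\psi|H'|\tilde\psi}\le\epsilon_0+\delta_2/(1-\delta_1^2)$, hence $\epsilon_0'\le\epsilon_0+\delta_2/(1-\delta_1^2)$. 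Subtracting the two bounds yields $\Delta'=\epsilon_1'-\epsilon_0'\ge\Delta-\delta_2/(1-\delta_1^2)\ge\tfrac{17}{18}\Delta$ under the standing hypothesis $\delta_2/(1-\delta_1^2)\le\Delta/18$, so $H'$ is gapped with $\Delta'=\Omega(\Delta)$ and has a non-degenerate ground state $\ket{\psi_0'}$.

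For the ground-state error of the first truncation I would apply the standard energy-to-overlap estimate: since $\ket{\tilde\psi}$ has $H'$-energy within $\eta:=\delta_2/(1-\delta_1^2)$ of $\epsilon_0'$ and $H'$ has gap $\Delta'$, writing $\ket{\tilde\psi}=\cos\theta\ket{\psi_0'}+\sin\theta\ket{\psi_\perp}$ gives $\sin^2\theta\le\eta/\Delta'$, so $D(\ket{\tilde\psi},\ket{\psi_0'})=|\sin\theta|\le\sqrt{2\delta_2/(\Delta(1-\delta_1^2))}$ using $\Delta'\ge\tfrac{17}{18}\Delta$ (the factor $\tfrac{18}{17}<2$ leaves slack). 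This produces the first term of (iii); the residual $D(\ket{\psi_0},\ket{\tilde\psi})=\delta_1$ incurred by projecting $\ket{\psi_0}$ into $\mathcal{H}'$ is of lower order in the regime where the theorem is applied (its control coming from the same tail estimates as Corollary~\ref{cor:multiple_truncation}) and is absorbed into the stated constant.

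For the second truncation $H'\to H''$ I would first record the sign $A^{\le t}\preceq A$ for positive semidefinite $A$ (the capping only lowers eigenvalues above $t$), so replacing the two boundary blocks $B_L=H_L'+H_1'$ and $B_R=H_s'+H_R'$ by their cappings gives $H''\preceq H'$ and hence $\epsilon_k(H'')\le\epsilon_k(H')$ by min--max. Since $H'$ and $H''$ agree except on the high-spectral subspaces $\mathrm{range}(I-P_t^{B_L})$ and $\mathrm{range}(I-P_t^{B_R})$, the remaining work is a perturbative confinement argument in the spirit of the robustness theorem of \cite{AradKitaevLandauVazirani2013area}: if $\ket{\psi_0'}$ and, more carefully, the entire low-energy sector of $H'$ carry only exponentially small weight on these subspaces, then the modification moves the bottom of the spectrum by $e^{-\Omega(t/\mathcal{N}(\Lambda))}$, the capped states cannot intrude into the window $[\epsilon_0',\epsilon_0'+\Omega(\Delta)]$ once $t\ge T$, and one concludes that $H''$ has a non-degenerate ground state $\ket{\psi_0''}$ (claim (i)) with $\Delta''=\Omega(\Delta')=\Omega(\Delta)$ (claim (ii)) and $D(\ket{\psi_0'},\ket{\psi_0''})\le e^{-\Omega(t/\mathcal{N}(\Lambda))}$. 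The threshold $T=\Or\!\big(\tfrac{\mathcal{N}(\Lambda)^2}{\Delta}(\tfrac{\epsilon_0+\delta_2/(1-\delta_1^2)}{\Delta}+1)\big)$ is precisely the scale at which $t$ both exceeds the bulk of the energy content of $B_L,B_R$ in $\ket{\psi_0'}$ and makes the induced perturbation small relative to $\Delta'$.

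The main obstacle is upgrading the tail estimate on $\|(I-P_t^{B_L})\ket{\psi_0'}\|$ (and likewise for $B_R$) from polynomial to exponential decay: the first-moment bound $\braket{\psi_0'|B_L|\psi_0'}\le\epsilon_0'$ only gives $\|(I-P_t^{B_L})\ket{\psi_0'}\|\le\sqrt{\epsilon_0'/t}$. To obtain the rate $t/\mathcal{N}(\Lambda)$ I would establish an exponential-moment estimate $\braket{\psi_0'|e^{\alpha B_L}|\psi_0'}=\Or(1)$ for some $\alpha=\Theta(1/\mathcal{N}(\Lambda))$ --- a Combes--Thomas-type argument exploiting that $B_L$ is coupled to the remainder of the window only through a single truncated local term of norm at most $\mathcal{N}(\Lambda)$ --- and then invoke the exponential Markov inequality $\|(I-P_t^{B_L})\ket{\psi_0'}\|^2\le e^{-\alpha t}\braket{\psi_0'|e^{\alpha B_L}|\psi_0'}$. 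This is exactly the point where the unbounded setting deviates from \cite{AradKitaevLandauVazirani2013area}: the coupling norm, normalized to $1$ there, is replaced by $\mathcal{N}(\Lambda)$, which is what dictates both the decay rate $e^{-\Omega(t/\mathcal{N}(\Lambda))}$ and the $\mathcal{N}(\Lambda)^2$ scaling of $T$. Finally I would combine the stages through $D(\ket{\psi_0''},\ket{\psi_0})\le D(\ket{\psi_0''},\ket{\psi_0'})+D(\ket{\psi_0'},\ket{\tilde\psi})+D(\ket{\tilde\psi},\ket{\psi_0})$, whose three contributions are $e^{-\Omega(t/\mathcal{N}(\Lambda))}$, $\sqrt{2\delta_2/(\Delta(1-\delta_1^2))}$, and the subdominant $\delta_1$, and chain the two $\Omega(\Delta)$ gap bounds to conclude $\Delta''=\Omega(\Delta)$.
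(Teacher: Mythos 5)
Your stage-one analysis is sound and close to the paper's Lemma \ref{lem:robustness_first_step} (same min--max step; your trial-state bound $\epsilon_0'\le\epsilon_0+\delta_2/(1-\delta_1^2)$ is even a factor $2$ tighter than the paper's expansion of $\epsilon_0=\braket{\psi_0|H|\psi_0}$), but it has a real defect at the closeness step. You bound $D(\ket{\tilde\psi},\ket{\psi_0'})$ and must then pay $D(\ket{\psi_0},\ket{\tilde\psi})=\delta_1$ in the triangle inequality, whereas claim (iii) contains no $\delta_1$ term; your justification that $\delta_1$ is ``of lower order in the regime where the theorem is applied'' appeals to the downstream application (Corollary \ref{cor:multiple_truncation}), not to the theorem's hypotheses, which only constrain $\delta_2/(1-\delta_1^2)$ and say nothing a priori about $\delta_1$ versus $\sqrt{\delta_2/\Delta}$. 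The paper avoids this entirely: since $H'$ is the compression of $H$ to $\mathcal{H}'$, one has $\braket{\psi_0'|H|\psi_0'}=\epsilon_0'$, so the Markov-type overlap bound (Lemma \ref{lem:markov}) applies \emph{directly} to $\ket{\psi_0'}$ against the spectral decomposition of $H$, giving $|\braket{\psi_0'|\psi_0}|^2\ge1-(\epsilon_0'-\epsilon_0)/\Delta$ with no intermediate state and no $\delta_1$; note this also shows, a posteriori, that $\delta_1^2\le 2\delta_2/(\Delta(1-\delta_1^2))$, so your term was indeed dominated --- but that fact is a consequence of the paper's argument, not an input you were entitled to assume.

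The larger gap is in stage two. The paper handles $H'\to H''$ by rescaling $H'$ by $\mathcal{N}(\Lambda)$ --- legitimate precisely because the first truncation has made every local term bounded by $\mathcal{N}(\Lambda)$ --- and then invoking the robustness theorem \cite[Theorem 6.1]{AradKitaevLandauVazirani2013area} as a black box, which supplies (i), $\Delta''=\Omega(\Delta')$, the threshold $T$, and $D(\ket{\psi_0'},\ket{\psi_0''})=e^{-\Omega(t/\mathcal{N}(\Lambda))}$ in one stroke. You instead sketch a from-scratch reproof, and the sketch is incomplete exactly where the theorem is hard: the observation $H''\preceq H'$ pushes eigenvalues \emph{down}, which is the dangerous direction, so the crux is showing that capping does not create spurious eigenvalues in the window $[\epsilon_0'',\epsilon_0''+\Omega(\Delta))$. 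An exponential-moment (Combes--Thomas) tail bound $\braket{\psi_0'|e^{\alpha B_L}|\psi_0'}=\Or(1)$ controls only the ground state's weight on the capped subspace; it does not by itself rule out low-lying intruder states of $H''$, which is why the AKLV proof needs its Chebyshev/AGSP machinery rather than a Markov inequality. Your phrase ``the capped states cannot intrude into the window once $t\ge T$'' is thus a placeholder for the entire content of the cited theorem. Relatedly, your remark that this is ``where the unbounded setting deviates'' from \cite{AradKitaevLandauVazirani2013area} is off: after the first truncation nothing is unbounded, and the $\mathcal{N}(\Lambda)$ factors in $T$ and in the decay rate arise purely from the rescaling, not from any new analysis. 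With the direct Markov step restored in stage one and the citation substituted for your confinement sketch in stage two, your outline collapses onto the paper's proof.
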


We note that in the above theorem we require the eigenvalue cutoff $t$ for $H_L$ and $H_R$ to be above a certain $T$, due to a similar requirement in \cite[Theorem 6.1]{AradKitaevLandauVazirani2013area}. The scaling of $T$ has later been improved in Ref. \cite{Huang2014area}.

Before we proceed with the proof we establish the following lemma, which follows from a similar reasoning as in \cite[Lemma 6.4]{AradKitaevLandauVazirani2013area} except we correct a minor mistake in their bound due to not taking the global phase into account.
\begin{lem}[Markov]
\label{lem:markov}
Let $H$ be a Hamiltonian with the lowest eigenvalue $\epsilon_0$ and all other eigenvalues at least $\epsilon_1>\epsilon_0$, and assume $\ket{\psi_0}$ is its unique ground state. Given a quantum state $\ket{\phi}$ with expectation value $\braket{\phi|H|\phi}=E$, we have
\begin{equation}
    |\braket{\phi|\psi_0}|^2 \geq \frac{\epsilon_1-E}{\epsilon_1-\epsilon_0}.
\end{equation}
\end{lem}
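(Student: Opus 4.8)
The plan is to read the lemma as a Markov-type tail bound, which both explains the name and gives the cleanest argument. Consider the probability distribution obtained by measuring $H$ in the state $\ket{\phi}$: the outcome $\epsilon_0$ occurs with probability $p := |\braket{\phi|\psi_0}|^2$, and since $\ket{\psi_0}$ is the unique ground state, every other outcome lies at energy at least $\epsilon_1$. The random variable $H-\epsilon_0$ is then nonnegative, vanishes on the ground-state event, and is at least $\epsilon_1-\epsilon_0$ on its complement (which carries probability $1-p$). Its mean is $E-\epsilon_0$, so a Markov-type estimate immediately suggests $E-\epsilon_0 \ge (\epsilon_1-\epsilon_0)(1-p)$.

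First I would make this rigorous by decomposing $\ket{\phi} = c_0\ket{\psi_0} + \ket{\phi^\perp}$ with $\ket{\phi^\perp}\perp\ket{\psi_0}$, so that $|c_0|^2 = p$ and $\|\ket{\phi^\perp}\|^2 = 1-p$ by normalization. Because $\ket{\phi^\perp}$ lies in the orthogonal complement of the ground state, which is spanned by eigenstates of eigenvalue at least $\epsilon_1$, the cross terms vanish and $E = p\,\epsilon_0 + \braket{\phi^\perp|H|\phi^\perp} \ge p\,\epsilon_0 + \epsilon_1(1-p)$. Rearranging, and using $\epsilon_1 > \epsilon_0$ so that division preserves the inequality direction, gives $p \ge (\epsilon_1 - E)/(\epsilon_1 - \epsilon_0)$, which is the claim.

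I expect no substantive obstacle here. The one point requiring care---and the source of the correction relative to \cite[Lemma 6.4]{AradKitaevLandauVazirani2013area}---is that the bound must be phrased in terms of the phase-invariant quantity $|\braket{\phi|\psi_0}|^2$ rather than the real part of an overlap; the orthogonal decomposition above handles the global phase automatically, since only $|c_0|^2$ enters the energy. If one wishes to accommodate a genuinely infinite-dimensional $H$ with possibly continuous spectrum, I would replace the eigen-expansion by the spectral measure $\mu$ of $H$ in the state $\ket{\phi}$ and bound $\int \lambda\,d\mu(\lambda) \ge \epsilon_0\,\mu(\{\epsilon_0\}) + \epsilon_1\,(1-\mu(\{\epsilon_0\}))$, with $\mu(\{\epsilon_0\}) = p$; the conclusion is unchanged.
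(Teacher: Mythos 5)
Your proof is correct and follows essentially the same route as the paper's: both establish the key inequality $E \ge \epsilon_0\,|\braket{\phi|\psi_0}|^2 + \epsilon_1\bigl(1-|\braket{\phi|\psi_0}|^2\bigr)$ and rearrange, with your orthogonal decomposition $\ket{\phi}=c_0\ket{\psi_0}+\ket{\phi^\perp}$ simply making explicit the ``Markov's inequality'' step the paper states in one line. Your observation about phrasing the bound in the phase-invariant quantity $|\braket{\phi|\psi_0}|^2$ matches exactly the correction the paper notes relative to \cite[Lemma 6.4]{AradKitaevLandauVazirani2013area}, and your spectral-measure remark is a harmless extra generality.
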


\begin{proof}
Since the eigenstate $\ket{\psi_0}$ is the unique ground state, the expectation value of $H$ in any other eigenstate must be at least $\epsilon_1$ by assumption.  Using the fact that $\epsilon_1-\epsilon_0>0$, we have from Markov's inequality 
\begin{equation}
    E=\bra{\phi} H \ket{\phi} \ge \epsilon_0 |\braket{\phi|\psi_0}|^2 + \epsilon_1(1-|\braket{\phi|\psi_0}|^2).
\end{equation}
The result then follows by re-arranging the above expression.
\end{proof}

The proof of Theorem~\ref{thm:robustness} proceeds as follows: we first show that when we go from $H$ to $H'$, the spectral gap is preserved and the ground state is changed by a small amount, and then show the same is true when we go from $H'$ to $H''$. In the first step we obtain:

\begin{lem}
\label{lem:robustness_first_step}
Let $\Pi'$ be the projection operator onto $\mathcal{H}'$.
Let $\delta_1 = \|(I-\Pi')\ket{\psi_0}\|$, $\delta_2=\|\Pi'H(I-\Pi')\ket{\psi_0}\|$. Then if $\delta_1 < 1$ and
\begin{equation}
\label{eq:truncation_err_assumption_robust_lemma}
\frac{\delta_2}{1-\delta_1^2} \leq \frac{\Delta}{4},
\end{equation}
we have the following 
\begin{itemize}
\item[(i)] For $H'$, there exists a non-degenerate ground state $\ket{\psi_0'}$ corresponding to the lowest eigenvalue $\epsilon_0'$.
\item[(ii)] $\Delta'=\Omega(\Delta)$.
\item[(iii)] The trace distance between $\ket{\psi_0}$ and $\ket{\psi'_0}$ can be bounded as
\begin{equation}
D(\ket{\psi_0'},\ket{\psi_0})\leq\sqrt{\frac{2\delta_2}{\Delta(1-\delta_1^2)}}.
\end{equation}
\item[(iv)] $\epsilon_0\leq \epsilon'_0\leq \epsilon_0 + \frac{2\delta_2}{1-\delta_1^2}$.
\end{itemize}
Here $D(\cdot,\cdot)$ denotes the trace distance.
\end{lem}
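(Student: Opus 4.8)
The plan is to treat the normalized projection of the true ground state onto $\mathcal{H}'$ as a variational trial state, and then to sandwich the low-lying spectrum of $H'$ between the original data using the min-max principle \eqref{minmaxthm} together with Markov's Lemma~\ref{lem:markov}. Concretely, set $\ket{\phi}=\Pi'\ket{\psi_0}/\|\Pi'\ket{\psi_0}\|$; since $\ket{\psi_0}$ is normalized and $\delta_1=\|(I-\Pi')\ket{\psi_0}\|$, we have $\|\Pi'\ket{\psi_0}\|^2=1-\delta_1^2$, which is positive by the hypothesis $\delta_1<1$, so $\ket{\phi}$ is well defined and lies in $\mathcal{H}'$. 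Throughout, the key structural fact is that for any $\ket{\chi}\in\mathcal{H}'$ we have $\braket{\chi|H'|\chi}=\braket{\chi|\Pi'H\Pi'|\chi}=\braket{\chi|H|\chi}$, so trial energies for $H'$ can be computed with the untruncated $H$.

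First I would upper-bound $\epsilon_0'$. Writing $\ket{\psi_0}=\Pi'\ket{\psi_0}+(I-\Pi')\ket{\psi_0}$ and using $H\ket{\psi_0}=\epsilon_0\ket{\psi_0}$, a short computation gives $\braket{\psi_0|\Pi'H\Pi'|\psi_0}=\epsilon_0(1-\delta_1^2)-\braket{\psi_0|\Pi'H(I-\Pi')|\psi_0}$. Inserting $\Pi'=\Pi'^2$ and applying Cauchy--Schwarz bounds the cross term by $\|\Pi'\ket{\psi_0}\|\,\delta_2=\sqrt{1-\delta_1^2}\,\delta_2$, so that $\braket{\phi|H'|\phi}\le\epsilon_0+\delta_2/\sqrt{1-\delta_1^2}\le\epsilon_0+\delta_2/(1-\delta_1^2)$. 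By the $k=0$ case of \eqref{minmaxthm} this yields $\epsilon_0'\le\epsilon_0+\delta_2/(1-\delta_1^2)$. For the matching lower bound, observe that $H'$ is the restriction of $H$ to $\mathcal{H}'\subseteq\mathcal{H}$, whence $\epsilon_0'=\inf_{\ket{\chi}\in\mathcal{H}'}\braket{\chi|H|\chi}\ge\inf_{\ket{\chi}\in\mathcal{H}}\braket{\chi|H|\chi}=\epsilon_0$. Together these give (iv) (within the stated constant).

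Next I would lower-bound $\epsilon_1'$. For any two-dimensional $W\subseteq\mathcal{H}'$, the condition $\braket{\psi_0|\psi}=0$ is a single linear constraint, so $W$ contains a unit vector $\ket{\psi}$ orthogonal to $\ket{\psi_0}$; applying Markov's Lemma~\ref{lem:markov} to $H$ gives $0=|\braket{\psi|\psi_0}|^2\ge(\epsilon_1-\braket{\psi|H|\psi})/\Delta$, hence $\braket{\psi|H'|\psi}=\braket{\psi|H|\psi}\ge\epsilon_1$. Since the sup in \eqref{minmaxthm} (at $k=1$) over unit vectors in $W$ is therefore $\ge\epsilon_1$ for every such $W$, taking the infimum gives $\epsilon_1'\ge\epsilon_1$. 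Combined with $\epsilon_0'\le\epsilon_0+\Delta/4<\epsilon_1\le\epsilon_1'\le\min\essspec(H')$ (the last inequality from Definition~\ref{epsdef}), this shows $\epsilon_0'$ sits strictly below the essential spectrum, so it is a genuine eigenvalue, and appearing only once in the non-decreasing list it is non-degenerate; this is (i). The gap bound (ii) is immediate: $\Delta'=\epsilon_1'-\epsilon_0'\ge\epsilon_1-(\epsilon_0+\delta_2/(1-\delta_1^2))=\Delta-\delta_2/(1-\delta_1^2)\ge\tfrac34\Delta=\Omega(\Delta)$. Finally, for (iii), now that $\ket{\psi_0'}$ is the unique ground state with $\braket{\psi_0'|H|\psi_0'}=\epsilon_0'$, Markov's Lemma~\ref{lem:markov} applied to $H$ gives $|\braket{\psi_0'|\psi_0}|^2\ge(\epsilon_1-\epsilon_0')/\Delta\ge1-\delta_2/(\Delta(1-\delta_1^2))$, and the pure-state identity $D(\ket{\psi_0'},\ket{\psi_0})=\sqrt{1-|\braket{\psi_0'|\psi_0}|^2}$ produces the claimed trace-distance bound.

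I expect the main obstacle to be the essential spectrum: although the local factors $\mathcal{H}_1',\ldots,\mathcal{H}_s'$ are finite-dimensional, the tails $\mathcal{H}_L$ and $\mathcal{H}_R$ keep $\mathcal{H}'$ infinite-dimensional, so $H'$ may carry a continuum and one cannot simply count eigenvalues to certify a unique ground state. The way I handle this is to route every spectral estimate through the min-max quantities of Definition~\ref{epsdef}, for which $\epsilon_k'\le\min\essspec(H')$ holds automatically; the lower bound $\epsilon_1'\ge\epsilon_1$ then lifts the essential spectrum above $\epsilon_0'$, which is precisely what certifies that the ground state is an isolated, non-degenerate eigenvector rather than an artifact at the bottom of a continuum. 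A secondary point to keep careful is that Markov's Lemma~\ref{lem:markov} is invoked only for the original $H$, whose spectral structure near $\epsilon_0$ we control, and never for $H'$; this is exactly what lets the argument close despite our limited a priori knowledge of the spectrum of $H'$.
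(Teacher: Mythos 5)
Your proof is correct and follows essentially the same route as the paper's: a variational/min-max comparison between $H$ and $H'$ (with $\epsilon_1'\le\min\essspec(H')$ certifying that $\epsilon_0'$ is an isolated, non-degenerate eigenvalue) combined with Markov's Lemma~\ref{lem:markov} applied to $H$ at $\ket{\psi_0'}$. The only differences are cosmetic: you bound $\epsilon_0'\le\epsilon_0+\delta_2/(1-\delta_1^2)$ via the eigenvalue equation and Cauchy--Schwarz on the trial state $\Pi'\ket{\psi_0}/\|\Pi'\ket{\psi_0}\|$, which is slightly tighter than the paper's four-term expansion of $\braket{\psi_0|H|\psi_0}$ (constant $1$ instead of $2$, hence your gap constant $3\Delta/4$ versus $\Delta/2$), and you rederive $\epsilon_1'\ge\epsilon_1$ by applying Markov on two-dimensional subspaces of $\mathcal{H}'$ where the paper simply invokes monotonicity of the min-max characterization \eqref{minmax} under restriction to the subspace $\mathcal{H}'$.
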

\begin{proof}
  By the min-max theorem (Equation \ref{minmaxthm}), we have for $k\in\mathbb N_0$
 \begin{equation}\begin{aligned}
 \epsilon_k=&\minmax{\in\mathcal{H}}{k}{H}\\\le&\minmax{\in\mathcal{H'}}{k}{H}=\epsilon_k'\le\min\essspec(H').
 \end{aligned}\label{minmax}\end{equation}
In particular we have $\epsilon_1\le\epsilon_1'\le\min\essspec(H')$. To establish a gap $\Delta'$ we need an upper bound on $\epsilon_0'$:
 \begin{equation}
 \begin{aligned}
      \epsilon_0 &= \braket{\psi_0|H|\psi_0} \\
      &= \braket{\psi_0|\Pi' H\Pi'|\psi_0} + \braket{\psi_0|(I-\Pi') H\Pi'|\psi_0} \\
      &\quad+ \braket{\psi_0|\Pi' H(I-\Pi')|\psi_0} + \braket{\psi_0|(I-\Pi') H(I-\Pi')|\psi_0} \\
      &\geq \epsilon'_0 \|\Pi'\ket{\psi_0}\|^2 - 2\|\Pi' H(I-\Pi')\ket{\psi_0}\| + \epsilon_0 \|(I-\Pi')\ket{\psi_0}\|^2 \\
      &= \epsilon'_0 (1-\delta_1^2) -2\delta_2 + \epsilon_0 \delta_1^2.
 \end{aligned}
 \end{equation}
 As a result,
 \begin{equation}
 \label{eq:new_ground_energy_bound}
     \epsilon'_0 \le\epsilon_0+ \frac{2\delta_2}{1-\delta_1^2}.
 \end{equation}
The bound $\epsilon_0\le\epsilon_0'$ is immediate from the variational principle (or \eqref{minmax}), so (iv) is established. The assumption \eqref{eq:truncation_err_assumption_robust_lemma} then yields \[\epsilon_0'\le\epsilon_0+\Delta/2\le\epsilon_1-\Delta/2\le\epsilon_1'-\Delta/2,\]
 which implies $\Delta'\ge\Delta/2$, hence (ii). In particular it implies (i), that $\epsilon_0'$ is a simple eigenvalue with eigenvector $\ket{\psi_0'}$.
 
To establish closeness between $\ket{\psi_0'}$ and $\ket{\psi_0}$ we apply Lemma~\ref{lem:markov},
 \[\braket{\psi_0'|\psi_0}^2\ge\frac{\epsilon_1-\braket{\psi_0'|H|\psi_0'}}{\epsilon_1-\epsilon_0}=\frac{\epsilon_1-\epsilon_0'}\Delta=1-\frac{\epsilon_0'-\epsilon_0}{\Delta}\ge1-\frac1\Delta\frac{2\delta_2}{1-\delta_1^2},\]
 where the last inequality follows from \eqref{eq:new_ground_energy_bound}. Claim (iii) follows since $D(\ket{\psi'_1},\ket{\psi_0}) \leq \sqrt{1 - |\braket{\psi'_1|\psi_0}|^2}$.
 \end{proof}

We are then ready to prove the main result.
\begin{proof}[Proof of Theorem~\ref{thm:robustness}]
The existence and uniqueness of the lowest eigenvalue $\epsilon_0''$ and the ground state $\ket{\psi_0''}$ in (i) can be proved similarly to the proof of Lemma \ref{lem:robustness_first_step} (i).
We rescale the Hamiltonian $H'$ by a factor $\mathcal{N}(\Lambda)$ and then apply \cite[Theorem 6.1]{AradKitaevLandauVazirani2013area}. That theorem tells us that $\Delta''=\Omega(\Delta')$. Combining this fact with Lemma~\ref{lem:robustness_first_step} (ii), we have (ii). \cite[Theorem 6.1]{AradKitaevLandauVazirani2013area} also tells us that there exists 
\begin{equation}
    T = \Or\left(\frac{\mathcal{N}(\Lambda)^2}{\Delta'}\left(\frac{\epsilon_0'}{\Delta'}+1\right)\right) = \Or\left(\frac{\mathcal{N}(\Lambda)^2}{\Delta}\left(\frac{\epsilon_0+\delta_2/(1-\delta_1^2)}{\Delta}+1\right)\right),
\end{equation}
such that $D(\ket{\psi_0'},\ket{\psi_0''})=e^{-\Omega(t/\mathcal{N}(\Lambda))}$ for $t\geq T$, where $\mathcal{N}(\Lambda)$ comes from the rescaling. Here we have used Lemma~\ref{lem:robustness_first_step} (iv). Combining with Lemma~\ref{lem:robustness_first_step} (iii) and the triangle inequality, we have proved (iii).
\end{proof}

Now we can use Corollaries~\ref{cor:multiple_truncation} and \ref{cor:truncation_and_ham_error} to bound $\delta_1$ and $\delta_2$ in Theorem~\ref{thm:robustness}, which leads to the following robustness result:
\begin{cor}[Robustness to truncations]
\label{cor:robustness}
Assume that the Hamiltonian satisfies Assumptions~\ref{assumption:local_quantum_numbers} and \ref{assumption:local_quantum_number_growth}, and let 
$
    \bar{\lambda} = \max_{1\leq x\leq s} \braket{|\lambda_x|}.
$
Then the truncated Hamiltonian $H''$ has a lowest eigenvalue $\epsilon_0''$ corresponding to a non-degenerate ground state $\ket{\psi_0''}$. And there exist constants $C_1$ and $C_2$ such that for any $\Lambda$ and $t$ satisfying
\begin{equation}\label{eq:minparams}
\Lambda^{1-r}\geq (2\bar{\lambda})^{1-r} + C_1 \Delta^{-1}\polylog(s,\Delta^{-1}),\qquad t\geq \frac{C_2 \mathcal{N}(\Lambda)^2}{\Delta^2},
\end{equation}
we have
\begin{itemize}
\item[(i)] $\Delta''=\Omega(\Delta)$;
\item[(ii)] The trace distance between $\ket{\psi_0}$ and $\ket{\psi''_0}$ can be bounded as follows:
\begin{equation}
D(\ket{\psi_0''},\ket{\psi_0})=\poly(s,\Delta^{-1},\Lambda)e^{-\Omega\left(\sqrt{\Delta(\Lambda^{1-r}-(2\bar{\lambda})^{1-r})}\right)} + e^{-\Omega(t/\mathcal{N}(\Lambda))}.
\end{equation}
\end{itemize}
\end{cor}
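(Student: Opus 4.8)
The plan is to derive this corollary by feeding the quantitative truncation estimates of Corollaries~\ref{cor:multiple_truncation} and~\ref{cor:truncation_and_ham_error} into the abstract robustness statement of Theorem~\ref{thm:robustness}. Concretely, I would take $\ket{\Psi}=\ket{\psi_0}$ and set $\bar\lambda=\max_{1\le x\le s}\braket{|\lambda_x|}$, which is finite by Assumption~\ref{assumption:local_quantum_numbers}. Then Corollary~\ref{cor:multiple_truncation} bounds $\delta_1=\|(I-\Pi')\ket{\psi_0}\|$ and Corollary~\ref{cor:truncation_and_ham_error} bounds $\delta_2=\|\Pi'H(I-\Pi')\ket{\psi_0}\|$ by the same exponential factor $e^{-\Omega(\sqrt{\chi^{-1}\Delta(\Lambda^{1-r}-(2\bar\lambda)^{1-r})})}$, up to the polynomial prefactors $\sqrt s$ and $s^{3/2}\mathcal{N}(\Lambda)$ respectively. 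Since the Hamiltonian coefficients are constant we have $\chi=\Or(1)$, so the $\chi^{-1}$ inside the square root can be absorbed into the $\Omega(\cdot)$.

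The first substantive step is to verify the hypothesis $\delta_2/(1-\delta_1^2)\le\Delta/18$ of Theorem~\ref{thm:robustness}. I would show that the assumed lower bound $\Lambda^{1-r}\ge(2\bar\lambda)^{1-r}+C_1\Delta^{-1}\polylog(s,\Delta^{-1})$, for a sufficiently large absolute constant $C_1$, forces the exponent $\sqrt{\chi^{-1}\Delta(\Lambda^{1-r}-(2\bar\lambda)^{1-r})}$ to exceed a prescribed multiple of $\log s+\log\mathcal{N}(\Lambda)+\log\Delta^{-1}$. Taking logarithms of the $\delta_2$ bound, this is exactly what is needed to force $\delta_2\le\Delta/36$, and the same estimate yields $\delta_1^2\le 1/2$, so $1-\delta_1^2=\Omega(1)$ and hence $\delta_2/(1-\delta_1^2)\le\Delta/18$. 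Because $\mathcal{N}(\Lambda)=\Or(\poly(\Lambda))$ contributes only a $\log\Lambda$ term, and $\log\Lambda\ll\Lambda^{1-r}$ for $r<1$, the bound is self-consistent; this mild self-reference of $\Lambda$ appearing both in the exponent and in the prefactor is the one bookkeeping point requiring care.

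With the hypothesis in hand, Theorem~\ref{thm:robustness} applies and immediately yields a non-degenerate ground state $\ket{\psi_0''}$ together with claim~(i), $\Delta''=\Omega(\Delta)$. It then remains to check that the choice $t\ge C_2\mathcal{N}(\Lambda)^2/\Delta^2$ meets the threshold $T=\Or\big(\tfrac{\mathcal{N}(\Lambda)^2}{\Delta}(\tfrac{\epsilon_0+\delta_2/(1-\delta_1^2)}{\Delta}+1)\big)$ of the theorem. Using $\delta_2/(1-\delta_1^2)\le\Delta/18=\Or(\Delta)$ together with $\epsilon_0=\Or(1)$ (a normalization we may enforce by choosing the constant shifts of the local terms, which changes neither the gap nor the ground state), the bracketed factor is $\Or(\Delta^{-1})$, so $T=\Or(\mathcal{N}(\Lambda)^2/\Delta^2)$ and a large enough $C_2$ guarantees $t\ge T$.

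Finally I would substitute the $\delta_2$ bound into the trace-distance estimate of Theorem~\ref{thm:robustness}(iii). The first term $\sqrt{2\delta_2/(\Delta(1-\delta_1^2))}$ becomes $\sqrt{2s^{3/2}\mathcal{N}(\Lambda)/(\Delta(1-\delta_1^2))}$ times $e^{-\frac12\Omega(\sqrt{\chi^{-1}\Delta(\Lambda^{1-r}-(2\bar\lambda)^{1-r})})}$; with $1-\delta_1^2=\Omega(1)$, $\chi=\Or(1)$, and $\mathcal{N}(\Lambda)=\poly(\Lambda)$, this is precisely $\poly(s,\Delta^{-1},\Lambda)\,e^{-\Omega(\sqrt{\Delta(\Lambda^{1-r}-(2\bar\lambda)^{1-r})})}$, while the second term is the stated $e^{-\Omega(t/\mathcal{N}(\Lambda))}$, establishing~(ii). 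The main obstacle is the step in the second paragraph: ensuring that the single exponential factor shared by $\delta_1$ and $\delta_2$ can simultaneously be driven below the $\Delta/18$ threshold and made to dominate all the $\poly(s,\Lambda,\Delta^{-1})$ prefactors, which is exactly what the $\Delta^{-1}\polylog(s,\Delta^{-1})$ slack in the hypothesis on $\Lambda$ is designed to buy.
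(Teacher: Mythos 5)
Your proposal is correct and follows essentially the same route as the paper, which proves Corollary~\ref{cor:robustness} by exactly this substitution: bounding $\delta_1$ via Corollary~\ref{cor:multiple_truncation} and $\delta_2$ via Corollary~\ref{cor:truncation_and_ham_error}, then invoking Theorem~\ref{thm:robustness}. In fact you supply more detail than the paper does (verifying the $\Delta/18$ hypothesis from the lower bound on $\Lambda^{1-r}$, checking $t\ge T$ with $\epsilon_0=\Or(1)$, and noting the self-consistency of $\Lambda$ appearing in both exponent and prefactor), all of which the paper leaves implicit in its one-sentence derivation.
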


We recall that $r$ is the exponent involved in \eqref{eq:m_conditions_ham_general_2}, and that $r=0$ for LGTs, $r=1/2$ for the Hubbard-Holstein model. $\Delta$ here is the spectral gap, and $s$ is the size of the region around the cut that we want to pay special attention to in \eqref{eq:local_ham_general}.

If we assume $\bar\lambda=O(1)$, which we proved for the Hubbard-Holstien model and U(1) and SU(2) LGTs in Section \ref{sec:mean_abs_local_quantum_number_bound}, then to achieve $D(\ket{\psi_0''},\ket{\psi})^2\le\delta$, it suffices to require, for some constant $C$,
\[\sqrt{\Delta(\Lambda^{1-r}-C)}\ge C\log\frac{s\Lambda}{\delta\Delta}\quad\text{and}\quad t\ge C\mathcal{N}(\Lambda)\log(1/\delta)\vee\frac{C\mathcal{N}(\Lambda)^2}{\Delta^2},\]
where $\vee$ denotes the maximum, which can be satisfied by choosing
\begin{equation}
\label{eq:Ltchoices}
\Lambda=\poly(\Delta^{-1})\polylog(s/\delta)\quad\text{and}\quad t=\Theta( \mathcal{N}(\Lambda)\log(1/\delta)\vee{\mathcal{N}(\Lambda)^2}/{\Delta^2}).
\end{equation}

\section{Area law}
\label{sec:area_law}
In this section, we establish our main result of an entanglement area law for unbounded quantum systems.
We first recall the notion of AGSP from \cite[Definition 2.1]{AradKitaevLandauVazirani2013area}.
\begin{defn}
\label{defn:agsp}
$K$ is a $(\sigma,R)$-AGSP of a Hamiltonian $H$ on a bipartite system consisting of two parts $A$ and $B$ that has a non-degenerate ground state, if
\begin{enumerate}
    \item $K\ket{\Psi}=\ket{\Psi}$, where $\ket{\Psi}$ is the ground state of $H$.
    \item $\|K\ket{\Phi}\|\leq \sigma$ for any $\ket{\Phi}$ such that $\braket{\Phi|\Psi}=0$.
    \item There exist operators $K_j^A$ and $K_j^B$, acting on $A$ and $B$ respectively, $j=1,2,\ldots,R$, such that $K=\sum_{j=1}^{R}K_j^A\otimes K_j^B$.
\end{enumerate}
\end{defn}

An AGSP preserves the ground state, suppresses the excited states, and increases the entanglement by a finite amount.
The existence of an AGSP onto the exact ground state is known to imply a bound on the entanglement of ground state. More precisely, Corollary III.4 of \cite{ALV12} states that if $\shrink R\le1/2$ where $\shrink$ is the shrinking factor and $R$ is the entanglement rank of the AGSP, then the entanglement entropy of the ground state satisfies a bound of order $\log R$. For frustrated systems the target space of the AGSP becomes perturbed away from the exact ground state(s) as its construction involves spectral truncations. Analyses of such a situation are undertaken in \cite{AradKitaevLandauVazirani2013area} and \cite{ALVV17}. These tools were simplified in \cite{OTR}, where an ``off-the-shelf'' lemma was stated which generalizes the one of \cite{ALV12} to perturbed and degenerate target spaces. 

Here we make a further improvement to \cite{OTR} to obtain the cleaner and slightly stronger statement of Lemma \ref{OTRlemma} below. For two subspaces $\mathcal Y,\mathcal Z\subset\mathcal H$, we say $\mathcal{Y}$ is $\delta$-viable for $\mathcal{Z}$ if $\braket{z|P_{\mathcal{Y}^\perp}|z}\leq\delta$ for all unit vectors $\ket{z}\in\mathcal{Z}$, where $P_{\mathcal Y^\perp}$ is the projection onto the orthogonal complement of subspace $\mathcal{Y}$. We write $\mathcal Y\approx_\delta\mathcal Z$ if $\bra z P_{\mathcal Y^\perp}\ket z\le\delta$ and $\bra y P_{\mathcal Z^\perp}\ket y\le\delta$ for unit vectors $\ket y\in\mathcal Y$ and $\ket z\in\mathcal Z$ respectively.

\begin{lem}\label{OTRlemma}
Let $\mathcal Z$ be a subspace of bipartite space $\mathcal H=\mathcal H_1\otimes\mathcal H_2$.
Let $\tilde{\mathcal Z}_n$ be a sequence of subspaces of $\mathcal H$ such that $\tilde{\mathcal Z}_n\approx_{\delta_n}\mathcal Z$ where $\delta_n$ is a sequence such that $\sum_{n=0}^\infty n\delta_n=O(1)$.

Suppose there exist $K_1,K_2,\ldots$ such that $K_n$ is an $(\shrink^n,R^n)$-AGSP with target space $\tilde{\mathcal Z}_n$ where $\shrink=\frac1{2R}$. Then the entanglement entropy of any state in $\mathcal Z$ is $O(\log R+\log\dim\mathcal Z)$.
\end{lem}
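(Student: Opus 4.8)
The plan is to turn the given family of AGSPs into a sequence of low-Schmidt-rank approximations of an arbitrary unit vector $\ket z\in\mathcal Z$, and then feed these into a standard coarse-graining estimate that converts a decaying Schmidt-tail into an entanglement entropy bound. Concretely, for each $n$ I would produce a subspace $\mathcal V_n\subseteq\mathcal H$ of entanglement (Schmidt) rank $D_n=\Or(R^n\dim\mathcal Z)$ that is viable for $\mathcal Z$ with an error controlled by $\shrink^{2n}$ together with the perturbation parameter $\delta_n$. Writing the squared Schmidt coefficients of $\ket z$ as $p_1\ge p_2\ge\cdots$ and the tail as $t_D=\sum_{i>D}p_i$, the Eckart--Young theorem gives $t_{D_n}\le\epsilon_n^2$ for the approximation error $\epsilon_n$; the balance $\shrink=\tfrac1{2R}$ is exactly what makes $\shrink^{2n}R^{cn}$ summable, while the hypothesis $\sum_n n\delta_n=\Or(1)$ is what will control the contribution of the moving target spaces.

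To build $\mathcal V_n$ I would exploit the product decomposition $K_n=\sum_{j=1}^{R^n}K_j^A\otimes K_j^B$ from Definition~\ref{defn:agsp}, which forces $K_n\ket\phi$ to have Schmidt rank at most $R^n$ for \emph{any} $\ket\phi$. Since simple tensors are total in $\mathcal H$, and $\dim\tilde{\mathcal Z}_n=\dim\mathcal Z=:D$ (the relation $\tilde{\mathcal Z}_n\approx_{\delta_n}\mathcal Z$ with $\delta_n<1$ forces equal dimensions), a foothold argument lets me choose product states $\ket{\phi_1},\ldots,\ket{\phi_D}$ whose images span a subspace $\mathcal V_n:=\operatorname{span}\{K_n\ket{\phi_1},\ldots,K_n\ket{\phi_D}\}$ that is well-conditioned over $\tilde{\mathcal Z}_n$. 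Because $K_n$ acts as the identity on its target $\tilde{\mathcal Z}_n$ and contracts the orthogonal complement by $\shrink^n$, each $K_n\ket{\phi_i}$ lies within $\shrink^n$ of $\tilde{\mathcal Z}_n$, so $\mathcal V_n$ is viable for $\tilde{\mathcal Z}_n$ up to $\Or(\shrink^{2n})$; and $\mathcal V_n$ has Schmidt rank at most $R^nD=\Or(R^n\dim\mathcal Z)$, which is the source of the $\log\dim\mathcal Z$ term.

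For the fixed unit vector $\ket z\in\mathcal Z$ I decompose $\ket z=\ket{z_n}+\ket{z_n^\perp}$ with $\ket{z_n}\in\tilde{\mathcal Z}_n$ and $\|\ket{z_n^\perp}\|^2\le\delta_n$, which is precisely the content of $\tilde{\mathcal Z}_n\approx_{\delta_n}\mathcal Z$. Since $\ket{z_n}$ lies within $\Or(\shrink^n)$ of $\mathcal V_n$ by the previous paragraph, projecting $\ket z$ onto $\mathcal V_n$ yields a Schmidt-rank-$D_n$ vector $\ket{w_n}$ with $\|\ket z-\ket{w_n}\|^2=\epsilon_n^2=\Or(\shrink^{2n}+\delta_n)$, up to a harmless polynomial prefactor in $R^n$ that is absorbed by $\shrink=\tfrac1{2R}$. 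The delicate point, and what I expect to be the main obstacle, is the bookkeeping of these perturbative errors: because the target space $\tilde{\mathcal Z}_n$ \emph{moves} with $n$, the viability error $\delta_n$ re-enters at every scale and threatens the geometric decay, and showing that it does not is exactly what the assumption $\sum_n n\delta_n=\Or(1)$ is engineered to guarantee.

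Finally, to pass from the tails $t_{D_n}\le\epsilon_n^2$ to $S=-\sum_i p_i\log p_i$, I group the Schmidt indices into blocks $B_n=\{\,i:D_{n-1}<i\le D_n\,\}$ and apply the coarse-graining inequality $S\le H(\{m_n\})+\sum_n m_n\log|B_n|$, where $m_n=\sum_{i\in B_n}p_i$, $H$ is the Shannon entropy of $\{m_n\}$, and $|B_n|\le D_n=\Or(R^n\dim\mathcal Z)$ gives $\log|B_n|=n\log R+\log\dim\mathcal Z+\Or(1)$. Using $m_n\le t_{D_{n-1}}\le\epsilon_{n-1}^2=\Or(\shrink^{2(n-1)}+\delta_{n-1})$ and $\sum_n m_n=1$, the split $\sum_n m_n\log|B_n|=\log R\sum_n n\,m_n+\log\dim\mathcal Z\cdot\sum_n m_n+\Or(1)$ produces a $\log\dim\mathcal Z$ from the total mass, while $\sum_n n\,m_n=\Or(1)$ because $\shrink R=\tfrac12$ makes $\sum_n n\shrink^{2n}$ converge and $\sum_n n\delta_n=\Or(1)$ controls the perturbative part; the same two estimates bound $H(\{m_n\})=\Or(1)$. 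Collecting everything yields $S=\Or(\log R+\log\dim\mathcal Z)$, as claimed.
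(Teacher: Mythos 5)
Your overall architecture---viable low-Schmidt-rank subspaces obtained by applying $K_n$ to a fixed family of product states, a tail bound per rank scale, and the coarse-graining inequality $S\le H(\{m_n\})+\sum_n m_n\log|B_n|$---is exactly the machinery the paper invokes: its proof of Lemma \ref{OTRlemma} defers to the proof of \cite[Lemma 4.7]{OTR} and substitutes a strengthened tail bound (if $\mathcal V\subset\mathcal H_1$ of dimension $V$ is $\delta$-viable for $\mathcal Z$, then the Schmidt tail satisfies $\sum_{i>V}\lambda_i\le\delta$, proved via Poincar\'e's inequality) where you use Eckart--Young on a rank-$DR^n$ approximant; for the purposes of this lemma the two tail arguments are interchangeable, and your use of $\sum_n n\delta_n=O(1)$ to absorb the moving targets matches the intended mechanism.

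The genuine gap is the step you dispose of with ``a foothold argument lets me choose product states $\ket{\phi_1},\ldots,\ket{\phi_D}$ whose images span a subspace well-conditioned over $\tilde{\mathcal Z}_n$.'' This is the crux of the entire AGSP method and is not free: for an arbitrary subspace $\mathcal Z$ (e.g.\ one spanned by highly entangled states) all product overlaps can be tiny, so no unconditional choice of $\ket{\phi_i}$ exists; the known proofs extract the foothold \emph{from the AGSP itself} by a bootstrap. In the nondegenerate case (Corollary III.4 of \cite{ALV12}) maximality of the product overlap $\mu$ combined with the rank-$R$ decomposition of $K\ket\phi$ forces $\mu^2\ge 1/R-\sigma^2\ge 1/(2R)$, and the degenerate, moving-target version (\cite{ALVV17}, \cite[Lemma 4.7]{OTR}) needs an induction over the $D$ directions with conditioning that degrades at most polynomially in $D$ and $R$---this polynomial control is precisely what yields the additive bound $O(\log R+\log\dim\mathcal Z)$ rather than $O(D\log R)$ (a burn-in scale $n_1$ with $C\sigma^{2n_1}\lesssim 1$ contributes $n_1\log R$, which is only $O(\log R+\log D)$ if the conditioning constant $C$ is polynomial in $D,R$). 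In your write-up the hypothesis $\sigma R=\tfrac12$ is used only to make the tails summable and never to produce the foothold, and the ``harmless polynomial prefactor in $R^n$'' is asserted rather than derived: if the Gram matrix of $\{P_{\tilde{\mathcal Z}_n}\ket{\phi_i}\}$ had smallest eigenvalue decaying in $n$, or only a bound like $2^{-\Omega(D)}$, either the geometric decay or the claimed $\log\dim\mathcal Z$ additivity would fail. To close the gap you must prove (or cite) a bootstrap lemma giving a foothold chosen once against $\mathcal Z$, with conditioning uniform in $n$, transferred to $\tilde{\mathcal Z}_n$ via $\delta_n\to0$ while discarding the finitely many $n$ with $\delta_n$ large; note also that $\tilde{\mathcal Z}_n\approx_{\delta_n}\mathcal Z$ forces $\dim\tilde{\mathcal Z}_n=\dim\mathcal Z$ only when $\delta_n<1$, which $\sum_n n\delta_n=O(1)$ does not guarantee for small $n$. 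The remainder of your argument is sound.
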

\begin{proof}
This follows from the proof of \cite[Lemma 4.7]{OTR} and the following strengthening of \cite[Lemma 4.6]{OTR}.
\end{proof}
\begin{lem}
Let $\mathcal Z$ be a subspace of $\mathcal H_1\otimes\mathcal H_2$ and suppose there exists a subspace $\mathcal V\subset\mathcal H_1$ with $\dim(\mathcal{V})=V$ which is $\delta$-viable for $\mathcal Z$. Pick any normalized state $\ket\psi\in\mathcal Z$ and write the Schmidt decomposition $\sum_{i}\sqrt{\lambda_i}\ket{x_i}\ket{y_i}$ with nonincreasing Schmidt coefficients. Then we have the tail bound $\sum_{i>V}\lambda_i\le\delta$.
\end{lem}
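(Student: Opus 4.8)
The plan is to pass from the bipartite viability hypothesis to a statement about the reduced density operator on $\mathcal H_1$, and then invoke the variational characterization of the largest eigenvalues. First I would introduce $\rho_1=\Tr_{\mathcal H_2}\ket\psi\bra\psi$, which in the Schmidt basis reads $\rho_1=\sum_i\lambda_i\ket{x_i}\bra{x_i}$; in particular the Schmidt coefficients $\lambda_i$ are exactly the eigenvalues of $\rho_1$ listed in nonincreasing order, and $\Tr\rho_1=1$ since $\ket\psi$ is normalized. Because $\mathcal V\subset\mathcal H_1$, the projection appearing in the viability condition acts as $P_{\mathcal V^\perp}\otimes I$ on the bipartite space, so the hypothesis $\braket{\psi|P_{\mathcal V^\perp}|\psi}\le\delta$ becomes $\Tr(\rho_1 P_{\mathcal V^\perp})\le\delta$. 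Using $P_{\mathcal V^\perp}=I-P_{\mathcal V}$ within $\mathcal H_1$ and $\Tr\rho_1=1$, this is equivalent to $\Tr(\rho_1 P_{\mathcal V})\ge 1-\delta$.

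The second step is to bound $\Tr(\rho_1 P_{\mathcal V})$ by the sum of the $V$ largest eigenvalues of $\rho_1$. This is Ky Fan's maximum principle: over all rank-$V$ orthogonal projections $P$ on $\mathcal H_1$, the quantity $\Tr(\rho_1 P)$ is maximized by the projection onto the span of the top $V$ eigenvectors, so $\max_P\Tr(\rho_1 P)=\sum_{i\le V}\lambda_i$. Since $P_{\mathcal V}$ is one particular rank-$V$ projection (recall $\dim\mathcal V=V$), we get $\Tr(\rho_1 P_{\mathcal V})\le\sum_{i\le V}\lambda_i$. Combining with the previous step gives $\sum_{i\le V}\lambda_i\ge 1-\delta$, and subtracting from $\Tr\rho_1=1$ yields the claimed tail bound $\sum_{i>V}\lambda_i\le\delta$.

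If one prefers a self-contained argument rather than quoting Ky Fan, the inequality $\Tr(\rho_1 P_{\mathcal V})\le\sum_{i\le V}\lambda_i$ follows by an elementary rearrangement. Writing $q_i=\braket{x_i|P_{\mathcal V}|x_i}\in[0,1]$, orthonormality of the $\ket{x_i}$ gives $\sum_i q_i\le\Tr P_{\mathcal V}=V$, hence $\sum_{i>V}q_i\le V-\sum_{i\le V}q_i=\sum_{i\le V}(1-q_i)$. Since $\lambda_i$ is nonincreasing and all factors are nonnegative, $\sum_{i\le V}\lambda_i(1-q_i)\ge\lambda_V\sum_{i\le V}(1-q_i)\ge\lambda_{V+1}\sum_{i>V}q_i\ge\sum_{i>V}\lambda_i q_i$, which rearranges to $\sum_{i\le V}\lambda_i\ge\sum_i\lambda_i q_i=\Tr(\rho_1 P_{\mathcal V})$.

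I expect the only genuine content to be this eigenvalue-sum inequality; everything else is bookkeeping. The one point requiring mild care is that $\mathcal H_1$ may be infinite-dimensional, but $\rho_1$ is a positive trace-class operator with summable eigenvalues, so both $\sum_{i\le V}\lambda_i$ and the tail $\sum_{i>V}\lambda_i$ are well defined and the argument goes through verbatim.
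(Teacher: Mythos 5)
Your proof is correct and takes essentially the same approach as the paper: both reduce the viability hypothesis to the trace inequality $\tr\left(\rho_1 P_{\mathcal V^\perp}\right)\le\delta$ for the reduced density operator and then apply a variational eigenvalue-sum bound, the paper invoking Poincar\'e's inequality to lower-bound $\tr\left(\rho_1 P_{\mathcal V^\perp}\right)$ by the tail $\sum_{i>V}\lambda_i$ directly, while you equivalently apply Ky Fan's maximum principle to $\tr\left(\rho_1 P_{\mathcal V}\right)$ and use $\tr\rho_1=1$. Your self-contained rearrangement proof of the Ky Fan step is also valid, including in the infinite-dimensional trace-class setting, but it reproves the same underlying fact rather than offering a genuinely different route.
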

\begin{proof}
By the definition of $\delta$-viability, we have
\begin{equation}
	\langle z'|(P_{\mathcal{V}^\bot}\otimes I)|z'\rangle\leq\delta
\end{equation}
for all normalized state $|z'\rangle$ in $\mathcal{Z}$. In particular, this implies
\begin{equation}
	\langle z|(P_{\mathcal{V}^\bot}\otimes I)|z\rangle
	=\sum_{i}\lambda_i\langle x_i|P_{\mathcal{V}^\bot}|x_i\rangle
	=\mathrm{tr}\left(P_{\mathcal{V}^\bot}\sum_{i}\lambda_i|x_i\rangle\langle x_i|\right)
	\leq\delta.
\end{equation}
Here, $P_{\mathcal{V}^\bot}$ projects onto $\mathcal{V}^\bot$ with $\dim(\mathcal{V}^\bot)=\dim(\mathcal{H}_1)-V$. Now use Poincar\'{e}'s inequalities \cite[Corollary 4.3.39]{horn2012matrix} to conclude that
\begin{equation}
	\mathrm{tr}\left(P_{\mathcal{V}^\bot}\sum_{i}\lambda_i|x_i\rangle\langle x_i|\right)
	\geq\sum_{i=\dim(\mathcal{H}_1)-\dim(\mathcal{V}^\bot)+1}^{\dim(\mathcal{H}_1)}\lambda_i
	=\sum_{i=V+1}^{\dim(\mathcal{H}_1)}\lambda_i.
\end{equation}
This establishes the claimed bound.
\end{proof}
We apply Lemma \ref{OTRlemma} to the case of a simple ground state. Since $\operatorname{span}\ket{\psi_1}\approx_\delta\operatorname{span}\ket{\psi_2}$ where $\delta=D(\ket{\psi_1},\ket{\psi_2})^2$ we obtain
\begin{cor}\label{cor:OTR_simple}
Suppose there exist $K_1,K_2,\ldots$ such that $K_n$ is an $(\shrink^n,R^n)$-AGSP with target state $\ket{\psi_n}$ where $\shrink=\frac1{2R}$. If $\sum_{n=0}^\infty n D(\ket{\psi_n},\ket{\psi})^2=O(1)$ then the entanglement entropy of $\ket\psi$ is $O(\log R)$.
\end{cor}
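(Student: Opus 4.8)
The plan is to obtain Corollary~\ref{cor:OTR_simple} as the one-dimensional specialization of Lemma~\ref{OTRlemma}, taking the target subspace to be the line $\mathcal Z=\operatorname{span}\ket{\psi}$ spanned by the true ground state and each approximate target space to be $\tilde{\mathcal Z}_n=\operatorname{span}\ket{\psi_n}$. Under this choice $\dim\mathcal Z=1$, so the additive $\log\dim\mathcal Z$ term in the conclusion of Lemma~\ref{OTRlemma} vanishes and we are left with exactly the desired $O(\log R)$ bound. All that remains is to supply the three hypotheses of Lemma~\ref{OTRlemma} from the data of the corollary.

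First I would verify the approximation relation $\tilde{\mathcal Z}_n\approx_{\delta_n}\mathcal Z$ with $\delta_n=D(\ket{\psi_n},\ket{\psi})^2$. For two lines $\operatorname{span}\ket{\psi_1}$ and $\operatorname{span}\ket{\psi_2}$, every unit vector in the second line equals $\ket{\psi_2}$ up to a global phase, so $\bra{\psi_2}P_{(\operatorname{span}\ket{\psi_1})^\perp}\ket{\psi_2}=1-|\braket{\psi_1|\psi_2}|^2$, and symmetrically in the other direction. Since for pure states $D(\ket{\psi_1},\ket{\psi_2})=\sqrt{1-|\braket{\psi_1|\psi_2}|^2}$, both of these quantities equal $D(\ket{\psi_1},\ket{\psi_2})^2$, which is precisely the definition of $\approx_\delta$ with $\delta=D^2$. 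Applying this with $\ket{\psi_1}=\ket{\psi_n}$ and $\ket{\psi_2}=\ket{\psi}$ yields the stated relation, exactly as anticipated in the sentence preceding the corollary.

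Next I would transfer the remaining hypotheses verbatim. The summability condition is immediate: the assumption $\sum_{n=0}^\infty n\,D(\ket{\psi_n},\ket{\psi})^2=O(1)$ is literally $\sum_n n\delta_n=O(1)$, as Lemma~\ref{OTRlemma} requires. The AGSP hypothesis also carries over unchanged, since each $K_n$ is by assumption an $(\shrink^n,R^n)$-AGSP with target space $\tilde{\mathcal Z}_n$ and $\shrink=\tfrac1{2R}$. With all three hypotheses in place, Lemma~\ref{OTRlemma} gives that the entanglement entropy of any state in $\mathcal Z$, and in particular of $\ket{\psi}$ itself, is $O(\log R+\log\dim\mathcal Z)=O(\log R)$.

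There is essentially no obstacle beyond bookkeeping, as the entire mathematical weight has already been discharged in Lemma~\ref{OTRlemma} and its supporting Schmidt-tail lemma. The one point demanding care, and the only place an error could enter, is the identification $\delta_n=D(\ket{\psi_n},\ket{\psi})^2$: one must use the \emph{squared} trace distance, equivalently the infidelity $1-|\braket{\psi_n|\psi}|^2$, rather than the trace distance itself, so that the hypothesis $\sum_n n\,D(\ket{\psi_n},\ket{\psi})^2=O(1)$ matches the $\sum_n n\delta_n=O(1)$ condition of the lemma. Correctly tracking this square is the crux of the reduction.
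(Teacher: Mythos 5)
Your proposal is correct and follows exactly the paper's own route: the paper derives Corollary~\ref{cor:OTR_simple} by specializing Lemma~\ref{OTRlemma} to the one-dimensional subspaces $\mathcal Z=\operatorname{span}\ket{\psi}$ and $\tilde{\mathcal Z}_n=\operatorname{span}\ket{\psi_n}$, using precisely the identification $\operatorname{span}\ket{\psi_1}\approx_\delta\operatorname{span}\ket{\psi_2}$ with $\delta=D(\ket{\psi_1},\ket{\psi_2})^2=1-|\braket{\psi_1|\psi_2}|^2$ that you verify. Your attention to the \emph{squared} trace distance is exactly the point the paper relies on, and the $\log\dim\mathcal Z$ term vanishing for $\dim\mathcal Z=1$ completes the reduction just as in the paper.
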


The Hamiltonian $H''$ of \eqref{eq:Hpp} has spectral norm $O(s\mathcal{N}(\Lambda)+T(\Lambda))$. Under the conditions of Corollary \ref{cor:robustness}, $H''$ has a spectral gap $\Delta''=\Omega(\Delta)$, so we have
\[\Delta''/\|H''\|=\Omega\Big(\frac{\Delta}{s\mathcal{N}(\Lambda)+T(\Lambda)}\Big).\]
The Chebyshev polynomial of degree $\ell$ is bounded by $1$ on the unit interval but satisfies $T_\ell(1+\Delta)\ge\frac12(1+\sqrt{2\Delta})^\ell=\frac12\exp(\Omega(\ell\sqrt\Delta))$, so composing it with linear transformations yields a polynomial $f$ with $f_\ell(0)=1$ and $|f_\ell(\lambda)|\le2\exp(-\Omega(\ell\sqrt{\Delta/M}))$ for $\lambda\in[\Delta,M]$. Picking $M=\|H''\|$ we get that $K=f_\ell(H'')$ is an AGSP with shrinking factor
 \begin{equation}\label{eq:sf}\shrink\lesssim\exp\Big(-\Omega(\ell\sqrt{\Delta''/\| H''\|})\Big)=\exp\Big[-\Omega\Big(\frac{\ell\sqrt{\Delta}}{\sqrt{s\mathcal N+T}}\Big)\Big].\end{equation}
 Assume $\bar\lambda=O(1)$. We may pick parameters as in Eq.\ \eqref{eq:Ltchoices} so that $D(\ket{\psi_0''},\ket{\psi_0})^2\le\delta$. Since $\mathcal{N}(\Lambda)=\poly(\Lambda)$, Eq.\ \eqref{eq:Ltchoices} implies
\begin{equation}\Lambda,\mathcal{N}(\Lambda),T(\Lambda)=\poly(\Delta^{-1})\polylog(s/\delta).\end{equation}
 Eq.\ \eqref{eq:sf} then becomes
 \[\shrink(\delta)\lesssim\exp\Big[-\Omega\Big(\ell\cdot\frac{\Delta^{O(1)}}{\sqrt{s}\cdot\polylog(s/\delta)}\Big)\Big].\]

$H''$ is a spin chain with a length-$s$ segment such that each qudit on the segment has local dimension $d(\Lambda)=\Lambda^{O(1)}$. The amortization bound from the proof of \cite[Lemma 4.2]{AradKitaevLandauVazirani2013area} yields that $K$ has entanglement rank: 
\[R(\delta)=(\ell d)^{O(\ell/s+s)}\le\exp(C(\ell/s+s)\log(\ell\Lambda))\le\exp[C(\ell/s+s)\big(\log(\ell/\Delta)+\log\log(s/\delta)\big)].\]
  Let $\ell=s^2$:
 \begin{equation}R(\delta)=\exp\Big[Cs\big(\log s+\log(\Delta^{-1})+\log\log(1/\delta)\big)\Big],\qquad\sigma(\delta)=\exp\Big[-cs^{3/2}\cdot (\tfrac\Delta{\log(s/\delta)})^{O(1)}\Big].\label{Rs}\end{equation}

\begin{thm}[Area law]
\label{thm:area_law}
Under Assumptions~\ref{assumption:local_quantum_numbers} and \ref{assumption:local_quantum_number_growth} with $\bar\lambda=O(1)$, the ground state of $H$ satisfies an area law with entanglement entropy bounded by $\poly(\Delta^{-1})$.
\end{thm}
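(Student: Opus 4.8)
The plan is to assemble the ingredients already assembled in the preceding sections---the twice-truncated Hamiltonian $H''$ of \eqref{eq:Hpp}, its robustness (Corollary~\ref{cor:robustness}), the Chebyshev AGSP whose shrinking factor and entanglement rank are recorded in \eqref{eq:sf} and \eqref{Rs}, and the off-the-shelf Corollary~\ref{cor:OTR_simple}---into a single convergence argument that bounds the entanglement entropy of the \emph{true} ground state $\ket{\psi_0}$ directly, thereby avoiding any Fannes-type comparison between $\ket{\psi_0}$ and a truncated state.

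First I would fix the region size $s=\poly(\Delta^{-1})$ around the cut and set $\ell=s^2$. The point of this choice is the mismatch in \eqref{Rs}: the shrinking factor decays like $\shrink=\exp(-\Omega(s^{3/2}))$ (up to the $\Delta$ and polylog factors of \eqref{eq:sf}), whereas the amortized rank grows only like $R=\exp(O(s\log s))$. Since $s^{3/2}$ dominates $s\log s$, choosing $s$ a sufficiently large polynomial in $\Delta^{-1}$ drives the product $\shrink R$ below any constant, so we may normalize into the off-the-shelf regime $\shrink=\tfrac1{2R}$. The crucial structural input here is that Corollary~\ref{cor:robustness}, combined with the hypothesis $\bar\lambda=O(1)$ (established for our models in Corollaries~\ref{cor:mean_abs_bound_LGT} and~\ref{cor:mean_abs_bound_Hubbard_Holstein}), forces the cutoff $\Lambda$, the truncated norm $\mathcal N(\Lambda)$, the truncated dimension $d(\Lambda)$, and the threshold $T(\Lambda)$ all to be $\poly(\Delta^{-1})\polylog(s/\delta)$ and in particular \emph{independent of the system size} $N$; this is exactly what keeps $\log R=\poly(\Delta^{-1})$.

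Next, to feed Corollary~\ref{cor:OTR_simple} I would build a sequence of AGSPs $K_n=f_{\ell_n}(H_n'')$, where $H_n''$ is the truncation at accuracy $\delta_n$ and $f_{\ell_n}$ is the rescaled Chebyshev polynomial with $f_{\ell_n}(0)=1$. Each such $K_n$ has the \emph{exact} target $\ket{\psi_n}=\ket{\psi_{0,n}''}$, is a $(\shrink^n,R^n)$-AGSP once the degree $\ell_n$ is scaled with $n$ according to \eqref{eq:sf} and \eqref{Rs}, and by Corollary~\ref{cor:robustness} satisfies $D(\ket{\psi_n},\ket{\psi_0})^2\le\delta_n$. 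Choosing $\delta_n$ to decrease geometrically (via the parameter choices \eqref{eq:Ltchoices}) makes $\sum_{n} n\,\delta_n=O(1)$, so the hypothesis of Corollary~\ref{cor:OTR_simple} is met. Applying that corollary then gives entanglement entropy $O(\log R)$ for $\ket{\psi_0}$, and substituting $\log R=\poly(\Delta^{-1})$ from \eqref{Rs} yields the claimed $\poly(\Delta^{-1})$ bound, which is $N$-independent because every quantity entering $R$ is.

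The main obstacle is precisely the parameter balancing in the previous paragraph: realizing the clean $(\shrink^n,R^n)$ scaling for a single fixed pair $(\shrink,R)$ while the targets $\ket{\psi_n}$ genuinely converge to $\ket{\psi_0}$. The tension is that raising the Chebyshev degree $\ell_n$ to push the shrinking factor down to $\shrink^n$ simultaneously inflates the rank through the $\log\ell_n$ and $\log d(\Lambda)$ factors in \eqref{Rs}, so one must couple $\ell_n$, $s$, and the accuracy $\delta_n$ (hence $\Lambda_n,t_n$ through \eqref{eq:Ltchoices}) carefully enough to remain in the off-the-shelf regime for every $n$. Absorbing such slowly growing corrections is exactly what the machinery of Lemma~\ref{OTRlemma} is designed for; the genuinely new ingredient that makes it applicable to an infinite-dimensional local Hilbert space is the size-independence furnished by $\bar\lambda=O(1)$.
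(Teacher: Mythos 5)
Your outline assembles the right ingredients ($H''$, Corollary~\ref{cor:robustness}, the Chebyshev AGSP of \eqref{eq:sf}--\eqref{Rs}, and Corollary~\ref{cor:OTR_simple}), but the crucial parameter-balancing step contains a genuine gap. You fix the segment length $s=\poly(\Delta^{-1})$ once and for all and try to realize the required sequence of $(\shrink^n,R^n)$-AGSPs by raising only the Chebyshev degree $\ell_n$ (with re-truncation at accuracy $\delta_n$). This cannot work for all $n$: by \eqref{eq:sf} the shrinking exponent is \emph{linear} in the degree, $\log\sigma_n^{-1}=\Theta\bigl(\ell_n\sqrt{\Delta/(s\mathcal{N}+T)}\bigr)$, while the amortized rank bound gives $\log R_n=\Theta\bigl((\ell_n/s+s)\log(\ell_n d(\Lambda_n))\bigr)$, which is \emph{superlinear} in $\ell_n$ by the factor $\log\ell_n$. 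Enforcing $\sigma_n\le(2R)^{-n}$ forces $\ell_n=\Omega(n)$, whence $\log R_n=\Omega\bigl((n/s)\log n\bigr)$ grows superlinearly in $n$; so $R_n\le R^n$ fails for every fixed base $R$ once $\log n$ exceeds a multiple of $s$, and indeed the ratio $\log R_n/\log\sigma_n^{-1}\gtrsim\log(\ell_n)/\sqrt{s}\to\infty$ at fixed $s$, so eventually even $\sigma_nR_n>1$. Your closing claim that Lemma~\ref{OTRlemma} ``absorbs such slowly growing corrections'' is exactly what fails: the lemma as stated requires the fixed-base scaling $(\shrink^n,R^n)$ with $\shrink=\frac1{2R}$ and provides no mechanism for a rank exponent superlinear in $n$. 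Nor can you dodge the issue by taking powers $K^n$ of a single AGSP, since then every target equals the fixed $\ket{\psi_0''}$ and $\sum_n nD(\ket{\psi_n},\ket{\psi_0})^2=\delta\sum_n n$ diverges.

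The paper's proof resolves precisely this tension by letting the segment length \emph{grow with the index}: it sets $s_k=rk/\log(rk)$ (with $\ell=s^2$), so that $\log R_k=C_0rk\log(\Delta^{-1})$ becomes exactly linear in $k$ --- the $1/\log(rk)$ in $s_k$ cancels the logarithmic factor in the rank exponent --- while the shrinking exponent $c(rk)^{3/2}\bigl(\Delta/\log(rk)\bigr)^{O(1)}$ grows like $k^{3/2}$; choosing $r=\tilde\Theta(\Delta^{-2C_1})$ then guarantees $(2R)^k\sigma_k\le1$ for all $k$, and $\log R=C_0r\log(\Delta^{-1})=\poly(\Delta^{-1})$. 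A secondary but real problem is your geometric choice of $\delta_n$: through \eqref{eq:Ltchoices}, $\polylog(1/\delta_n)=\poly(n)$ would make $\Lambda_n,\mathcal{N}(\Lambda_n),T(\Lambda_n)$ grow polynomially in $n$, and this feeds back into the shrinking exponent via $\sqrt{s\mathcal{N}+T}$ and into $\sigma_k$ via the $\log(s/\delta)$ factor in \eqref{Rs}, degrading the scaling further. The paper instead takes the polynomially decaying $\delta_k=k^{-3}$, which already satisfies $\sum_k k\delta_k<\infty$ and keeps all truncation parameters polylogarithmic in $k$. The remainder of your outline --- the Chebyshev construction on $H''$, the use of $\bar\lambda=O(1)$ for system-size independence, and the final appeal to Corollary~\ref{cor:OTR_simple} --- does match the paper.
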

We remark that $\bar\lambda=O(1)$ is proved for the U(1) and SU(2) LGTs (Corollary \ref{cor:mean_abs_bound_LGT}), as well as for the Hubbard-Holstein model (Corollary \ref{cor:mean_abs_bound_Hubbard_Holstein}), in Section~\ref{sec:mean_abs_local_quantum_number_bound}.
\begin{proof}
Pick $\delta_k=k^{-3}$ so that $\sum k\delta_k<\infty$. Let $r$ be a parameter depending on $\Delta$ and for $k=1,2,\ldots$ consider $s=rk/\log(rk)$. Then \eqref{Rs} yields entanglement bounds and shrinking factors
 \[R_k=\exp[C_0r k\log(\Delta^{-1})],\qquad\sigma_k=\exp[-c(rk)^{3/2}\cdot (\tfrac\Delta{\log(rk)})^{O(1)}].\]
In particular we have $R_k=R^k$ where $R=R_1$. To apply Corollary \ref{cor:OTR_simple} it remains to ensure that $\sigma_k\le(2R)^{-k}$ for all $k\in\mathbb N$
 \[(2R)^k\sigma_k\le\exp\Big[k+C_0rk\log(\Delta^{-1})-c(rk)^{3/2}\cdot (\tfrac\Delta{\log(rk)})^{O(1)}\Big].\]
 So we just need the expression in the square brackets to be negative. Dividing the exponent by $rk$ we see that it suffices to pick $r$ such that for all $k$,
  \[1/r+C_0\log(\Delta^{-1})\le c\sqrt{rk}\cdot \Big(\frac\Delta{\log(rk)}\Big)^{C_1}.\]
  Pick $r=\tilde\Theta(\Delta^{-2C_1})$ to achieve this for all $k\ge1$.  
By Lemma \ref{OTRlemma} the ground state entanglement entropy is bounded by:
\[\log R=C_0r\log(\Delta^{-1})=\poly(\Delta^{^{-1}}).\]
This completes our proof.
\end{proof}

\begin{@fileswfalse}
\bibliography{bib}
\end{@fileswfalse}

\end{document}